\documentclass[9pt,final,journal,twocolumn]{IEEEtran}

\IEEEoverridecommandlockouts
\usepackage{color}

\usepackage{amsthm,amsmath,amssymb}
\usepackage{mathtools}

\usepackage{verbatim}
\DeclareMathOperator{\Tr}{Tr}
\usepackage{cite}

\usepackage{url}
\usepackage{cases}
\usepackage{stmaryrd}

\usepackage[inline]{enumitem}

\usepackage{graphicx}

\ifCLASSOPTIONcompsoc
  \usepackage[caption=false,font=normalsize,labelfont=sf,textfont=sf]{subfig}
\else
  \usepackage[caption=false,font=footnotesize]{subfig}
\fi

\usepackage[switch]{lineno}
\usepackage[named]{algo}
\usepackage[noend]{algpseudocode}
\usepackage{algorithm}

\usepackage[flushleft]{threeparttable} 
\usepackage{bm}
\hyphenation{op-tical net-works semi-conduc-tor}

\newtheorem{theorem}{Theorem}

\newtheorem{remark}{Remark}

\usepackage{caption}
\captionsetup[figure]{justification=centering}

\begin{document}

\title{Fundamental Limits for Near-Field Sensing --- Part I: Narrow-Band Systems}

\author{Tong Wei,~\IEEEmembership{Member,~IEEE}, Kumar Vijay Mishra, \IEEEmembership{Senior Member,~IEEE}, 
Bhavani Shankar M.R., \IEEEmembership{Senior Member,~IEEE}, Björn Ottersten,~\IEEEmembership{Fellow,~IEEE}


\thanks{The authors are with the Interdisciplinary Centre for Security, Reliability and Trust (SnT), University of Luxembourg, Luxembourg City L-1855, Luxembourg. E-mail: \{tong.wei@, kumar-mishra@ext.,bhavani.shankar@, bjorn.ottersten@\}uni.lu.}
}

\maketitle 
\vspace{-2.0em}
\vspace{-4cm}
\begin{abstract}
Extremely large-scale antenna arrays (ELAAs) envisioned for 6G enable high-resolution sensing. However, the ELAAs worked in extremely high frequency will push operation into the near-field region, where spherical wavefronts invalidate classical far-field models and alter fundamental estimation limits. The purpose of this and the companion paper (Part II) is to develop the
theory of fundamental limits for near-field sensing systems in detail.
In this paper (Part I), we develop a unified narrow-band near-field signal model for joint parameter sensing of moving targets using the ELAAs. Leveraging the Slepian–Bangs formulation, we derive closed-form Cramér–Rao bounds (CRBs) for joint estimation of target position, velocity, and radar cross-section (RCS) under the slow-time sampling model. To obtain interpretable insights, we further establish explicit far-field and near-field approximations that reveal how the bounds scale with array aperture, target range, carrier wavelength, and coherent integration length. The resulting expressions expose the roles of self-information terms and their cross terms, clarifying when Fresnel corrections become non-negligible and providing beamformer and algorithm design guidelines for near-field sensing with ELAAs. Simulation results validate the derived CRBs and their far-field and near-field approximations, demonstrating accurate agreement with the analytical scaling laws across representative array sizes and target ranges. 
\end{abstract}

\begin{IEEEkeywords}
Cramér–Rao bound, extremely large-scale antenna array,  narrow-band signal, near-field sensing. 
\end{IEEEkeywords}

\IEEEpeerreviewmaketitle

\section{Introduction}
The emergence of extremely large-scale antenna arrays (ELAAs) with synthetic-aperture architectures in extremely high frequencies (EHF) is driving a paradigm shift from conventional far-field operation to near-field communications to enable sixth-generation (6G) wireless systems \cite{elbir2024terahertz,elbir2024curse,hodge2020intelligent}. 
These systems aim to provide ultra-reliable, low-latency communication, massive connectivity, ultra-high data rates, and precise positioning \cite{wachowiak2026sizing,wang2023on}. These advancements also pave the way for integrating sensing capabilities into 6G networks, addressing critical challenges such as spectrum congestion, hardware costs, and power consumption \cite{mishra2024signal,mishra2019toward,wei2024RIS,vargas2023dual}. However, the combination of ELAA and EHF extends the Rayleigh distance, necessitating operation in the near-field region \cite{balanis2011modern,zohair2021near,bevilacqua2025reflectivity,rajwan2025near,jiang2024near}. This shift underscores the need for accurate near-field signal models and sophisticated processing algorithms to fully harness the potential of these technologies \cite{elbir2025near}.
 
Near-field sensing leverages electrically large apertures operating in the Fresnel region, where the received field is no longer well approximated by a plane wave model and the array response depends jointly on range and angle through spherical-wave propagation \cite{elbir2025near}. This inherent range–angle dependence enables near-field beam focusing and provides richer spatial signatures than conventional far-field processing \cite{wei2025wideband,wei2025crb}, which can further enhance resolvability for localization, tracking, and imaging with large arrays and synthetic apertures \cite{wu2021modified,jiang2024near,he2025grating,eamaz2023near,guerra2021near}. Consequently, near-field sensing has recently become an important building block for 6G applications including high-precision positioning, mapping, short-range imaging for robotics and autonomous systems, and environment-aware perception supported by programmable or holographic apertures \cite{zohair2021near,bevilacqua2025reflectivity,rajwan2025near,jiang2024near,wang2025perofrmance}. More recently, near-field integrated sensing and communication (ISAC) is attracting significant interest as a spectrum, energy, and hardware-efficient 6G paradigm, where the shared aperture, RF front-end, and waveform are used to support both data transmission and sensing \cite{wang2023near,zhao2024modeling}.

To quantify the achievable estimation accuracy, the Cramér–Rao bound (CRB) is widely adopted as a fundamental metric to guide waveform design, beamforming, and sensing-oriented resource allocation for parameter estimation and tracking \cite{esmaeilbeig2022cramer,dokhanchi2019mmWave,lv2022co,wang2024crb2}. Nevertheless, most of the classical CRB literature is built on far-field assumptions, where propagation is modeled by planar wavefronts and the path loss is often treated as constant over the signal bandwidth and antenna element \cite{stoica2005spectral,dogandzic2001crb}. Under these conditions, the range, angle, and Doppler are frequently weakly coupled or even separable, yielding comparatively simple Fisher information structures and tractable closed-form bounds \cite{dogandzic2001crb}. However, once the operating region enters the near field, these far-field simplifications no longer apply. In particular, spherical-wave propagation induces an explicit range dependence in the array response and tightens the coupling between spatial parameters and motion over slow time. Motivated by this, Part I develops a dedicated CRB for narrow-band near-field sensing that captures these effects and quantifies the resulting performance limits.

\vspace{-0.3cm}
\subsection{Related Works}
\vspace{-0.1cm}
Estimation-theoretic bounds, most notably the CRB, provide a rigorous benchmark for the best achievable accuracy under a specified observation model and noise statistics and are therefore widely used in synthetic-aperture design, direction of arrival (DOA), and channel estimation \cite{stoica1990performance,dong2002optimal,wei2025crb}. In the narrow-band near-field setting, Gazzah et.al. derived the matrix-free conditional CRBs for joint angle and range estimation using the exact delay model for arbitrarily spaced linear arrays, and further showed how the bound supports CRB-driven sensor placement beyond classical ULAs \cite{gazzah2014crb}.


Complementary work has investigated near-field localization bounds using more physically grounded electromagnetic field formulations~\cite{torres2021crb} and has studied near-field tracking limits, i.e., position and velocity, using phase profiles across large arrays~\cite{guerra2021near}. Beyond co-located array setups, near-field localization in bistatic MIMO radar with arbitrary array geometries was investigated in \cite{khamidullina2021conditional}, where both deterministic and stochastic CRBs were derived under exact near-field delay and amplitude profiles, and representative estimators were benchmarked against these limits.
Motivated by the requirement of future 6G systems with ELAAs, closed-form near-field CRB characterizations for angle and range were developed in \cite{wang2024crb}, revealing key scaling behaviors with aperture and antenna number, and clarifying the accuracy limits of common Taylor-type approximations in CRB analysis. Complementarily, \cite{meng2024crb} proposed a generic modular ELAA architecture and obtained tractable CRBs under a hybrid spherical and planar wave model (near-field for the full array but far-field within each subarray), highlighting the impact of subarray diversity and edge-distributed deployments.
Performance bounds for near-field localization with widely spaced multi-subarray mmWave and THz MIMO architectures were reported in \cite{yang2024performance}, further emphasizing the interplay between array topology and fundamental limits.

Different from the previous works that primarily focus on angle and range estimation for either static or single target setup, Part~I of this paper considers a narrow-band near-field sensing and develops unified CRB characterizations for \emph{RCS, velocity, and location} with multiple targets, together with the comparison of near-field and far-field approximations, and scaling insights tailored to near-field systems. Based on these foundations, Part~II of this paper \cite{wei2025PartII} will extend those fundamental limits for near-field sensing in wide-band systems.

\vspace{-0.3cm}

\subsection{Contributions}
\vspace{-0.1cm}
In this paper (\textbf{Part I}), we establish fundamental limits for near-field sensing under narrow-band signaling and provides closed-form insights suitable for system design. The contributions of this paper over prior-art are listed below:
\begin{enumerate}
   \item  \textbf{Unified Narrow-band Near-field  Sensing Model:} We establish a unified narrow-band near-field signal model for synthetic-aperture sensing under the general monostatic or bistatic geometry, explicitly capturing spherical-wave propagation, element-dependent path loss, and slow-time Doppler evolution for moving targets.
   \item  \textbf{General CRBs for Parameter Estimation:} Based on the proposed model, we derive the Fisher information matrix (FIM) and conditional CRBs for key target parameters, including RCS, velocity, and location, while retaining the near-field coupling effects between position- and motion-related parameters. 
   \item  \textbf{Asymptotic Closed-form Expressions:}  To enable tractable analysis and reveal design insights, we develop the near-field approximations and obtain asymptotic closed-form CRB expressions that characterize the scaling laws with respect to range, antenna number, SNR, and snapshot number, thereby clarifying when near-field approximation become valid.
   \item  \textbf{Extensive Performance Evaluation:} 
   We provide comprehensive numerical experiments to validate the theoretical CRBs and illustrate the evolution of performance from classical far-field approximation to proposed near-field approximation, serving as a benchmark for future algorithm design and measurement verification.
\end{enumerate}

\vspace{-0.3cm}
\subsection{Organization}
\vspace{-0.1cm}
The remainder of Part~I is organized as follows. Section~II introduces the system geometry and the narrow-band near-field signal model. Section~III derives the Fisher information matrix, presents conditional narrow-band CRBs for RCS, velocity, and 2D location, and then provides asymptotic CRB expressions under near-field and far-field approximations and discusses the resulting scaling laws and design insights. Section~IV reports numerical experiments to validate the analysis and quantify the accuracy of the proposed approximations. Section~V concludes Part~I and briefly outlines the extension to wideband systems in Part~II.

\begin{figure}[t]
\centering{\includegraphics[width=1\columnwidth]{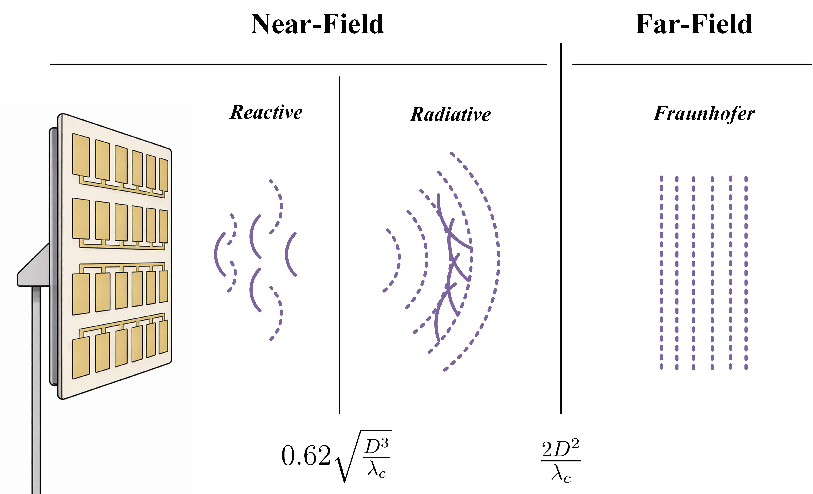}}
\caption{Comparison of reactive near-field, Fresnel near-field, and Fraunhofer far-field regions of an aperture.
\label{dig}}
\end{figure}

\vspace{-0.3cm}
\subsection{Notations}
\vspace{-0.1cm}
Throughout this paper, lowercase and uppercase boldface letters represent vectors and matrices, respectively. The classical operations and notations include  Transposition, $(\cdot)^T$, complex conjugation, $(\cdot)^{\ast}$,  Hermitian $(\cdot)^H$, matrix vectorization, $\mathrm{vec}(\cdot)$, a diagonal matrix, $\mathrm{diag}(\cdot)$,  and a block diagonal matrix, $\mathrm{bdiag}(\cdot)$. 
%
\section{Narrow-band Signal Model}
Consider a narrow-band near-field sensing system operating at carrier frequency $f_c$ with wavelength $\lambda_c=c/f_c$. Both transmitter (Tx) and  receiver (Rx) employ ULAs with $N_t$ and $N_r$ antenna elements deployed in a 2D Cartesian plane with the inter-element spacings being, $d_t$ and $d_r$, respectively. The coordinates of the $n_t$-th Tx and $n_r$-th Rx elements are denoted by ${\mathbb I}_{n_t}=(x_{n_t},y_{n_r})$ and ${\mathbb I}_{n_r}=(x_{n_r},y_{n_r})$, where $n_t\in\{1,\ldots,N_t\}$ and $n_r\in\{1,\ldots,N_r\}$. The radiative near-field sensing region contains $Q$ sensing targets, including the radar targets, the communication users, and static scatterers \cite{hu2024joint}. Fig.\ref{dig} illustrates the wavefront evolution across regions: reactive near-field, radiative (Fresnel) near-field, and far-field (Fraunhofer), separated by the approximate boundaries $0.62\sqrt{D^{3}/\lambda_c}$ and $2D^{2}/\lambda_c$, respectively.
These targets are indexed by $q\in\{1,\ldots,Q\}$, whose 2D location and 2D velocity are ${\mathbb I}_q=(x_q,y_q)$ and ${\bm v}^{(q)}=[v_x^{(q)},v_y^{(q)}]^T$, respectively\footnote{In ISAC systems, scatterers are typically assumed to be static compared to the moving communication users and radar targets\cite{mishra2024signal,mishra2019toward}.}.
%
\subsection{Narrow-band Transmit Signal Model}
Assume the Tx emits $M$ narrow-band snapshots within a coherent processing interval (CPI) of length $T_{\mathrm{CPI}}=MT_{\mathrm{sym}}$, where $T_{\mathrm{sym}}$ is the snapshot repetition interval \cite{xi2020gridless}. Let ${\bf x}(m)\in\mathbb{C}^{N_t\times 1}$ denote the known transmit symbol vector at the slow-time index $m$, which is given by
\begin{align}
{\bf x}(m)=[x_{1}(m),\cdots,x_{N_t}(m)]^T.
\end{align}
Herein, we consider the commonly used isotropic transmission strategy, 
and then we have 
\begin{equation}\label{iso_trans}
\mathbb{E}\left\{\frac{1}{M}\sum_{m=0}^{M-1}{\bf x}(m){\bf x}^H(m)\right\}=\mathcal{P}{\bf I}_{N_t}.
\end{equation}
where $\mathcal{P}$ denotes the transmit power. 
The continuous-time baseband signal representation is given by
\begin{align}\label{eq:NB_tx_bb}
{\bf x}(t)=\sum_{m=1}^{M}{\bf x}(m)\mathrm{rect}\left(\frac{t-mT_{\mathrm{sym}}}{T_{\mathrm{sym}}}\right),
\end{align}
where $\mathrm{rect}(\cdot)$ denotes the rectangular function, i.e., $\mathrm{rect}(x)=1$ for $x\in[0,1)$ and $0$ otherwise.
Then, the transmitted pass-band signal of Tx is then given
\begin{align}\label{eq:NB_tx_pb}
\widetilde{\bf x}(t)=\Re\left\{{\bf x}(t)e^{j2\pi f_ct}\right\}.
\end{align}
\subsection{Assumptions for Narrow-band Near-Field Modeling}
To obtain a tractable narrow-band near-field model, we adopt the following standard assumptions.
\begin{description}
\item[A1] ``Narrow-band Propagation:'' The effective signal bandwidth $B$ satisfies $B\ll f_c$, such that a propagation delay can be modeled as a phase rotation at $f_c$, i.e., $\widetilde{x}_{n_t}(t-\tau)\approx \widetilde{x}_{n_t}(t)e^{-j2\pi f_c\tau}$ after demodulation and matched filtering.
\item[A2] ``Snapshot Invariant RCS:'' RCS, i.e., $\alpha_q$, is constant over the CPI and does not vary with $m\in\{1,\cdots,M\}$.
\item[A3] ``Negligible Multi-Bounce Reflections:'' Higher-order reflections are absorbed into the effective noise \cite{xu2023bandwidth}.
\item[A4] \emph{Constant-velocity and small displacement:}
The target velocity ${\bm v}^{(q)}$ is constant over the CPI and the displacement is negligible relative to the bistatic ranges, i.e.,
$\|{\bm v}^{(q)}\|T_{\mathrm{CPI}} \ll \mathrm{min} \left(r_{n_t}^{(q)}, r_{n_r}^{(q)}\right)$, where $r_{n_t}^{(q)}, r_{n_r}^{(q)}$
denote the bistatic ranges (kindly refer \eqref{NF_dis} below). Hence, the distance-dependent pathloss terms are treated as snapshot-invariant, while Doppler is captured through the phase evolution.
\end{description}
Based on these assumptions, we will derive the corresponding narrow-band near-field receive model by incorporating the propagation delays and Doppler-induced phase evolution across snapshots in the sequel.
\subsection{Near-Field Receive Signal Model}
Let the  bistatic distances from $q$-th target to the $n_t$-th Tx element and $n_r$-th Rx element be 
\begin{align} \label{NF_dis}
r_{n_t}^{(q)}=\|{\mathbb I}_q-{\mathbb I}_{n_t}\|,\quad
r_{n_r}^{(q)}=\|{\mathbb I}_q-{\mathbb I}_{n_r}\|.
\end{align}
Then, the corresponding bistatic propagation delay is
\begin{align}\label{eq:NB_delay}
\tau_{n_tn_r}^{(q)}=\frac{r_{n_t}^{(q)}+r_{n_r}^{(q)}}{c}.
\end{align}
Based on \eqref{NF_dis} and \eqref{eq:NB_delay}, the pass-band signal received at the $n_r$-th Rx antenna is modeled as
\begin{small}
\begin{flalign}\label{eq:NB_rx_pb}
&\widetilde y_{n_r}(t)
\!=\!\sum_{q=1}^{Q}\alpha_qg_{n_r}^{(q)}
\sum_{n_t=1}^{N_t}g_{n_t}^{(q)}
\widetilde x_{n_t}(t\!-\!\tau_{n_tn_r}^{(q)})
e^{j2\pi f_{n_tn_r}^{(q)}t}
\!+\!\widetilde n_{n_r}(t),
\end{flalign}
\end{small}
\noindent\hspace{-1.0em} where $\widetilde n_{n_r}(t)$ denotes the receiver noise, $f_{n_tn_r}^{(q)}$ denotes the Doppler shift of the path between the $n_t$-th Tx element and the $n_r$-th Rx element; this can be modeled as \cite{sun2024widely}
\begin{align}\label{eq:NB_doppler}
f_{n_tn_r}^{(q)}
=\frac{f_c}{c}\left(
\frac{\langle{\bm v}^{(q)},{\mathbb I}_q-{\mathbb I}_{n_t}\rangle}{\|{\mathbb I}_q-{\mathbb I}_{n_t}\|}
+
\frac{\langle{\bm v}^{(q)},{\mathbb I}_q-{\mathbb I}_{n_r}\rangle}{\|{\mathbb I}_q-{\mathbb I}_{n_r}\|}
\right).
\end{align}
$g_{n_t}^{(q)}$ and $g_{n_r}^{(q)}$ denote the distance-dependent pathloss modeled as 
\begin{align}\label{eq:NB_pathloss}
g_{n_t}^{(q)}=\frac{\sqrt{c_{n_t}}\lambda_c}{4\pi \|{\mathbb I}_q-{\mathbb I}_{n_t}\|},\quad
g_{n_r}^{(q)}=\frac{\sqrt{c_{n_r}}\lambda_c}{4\pi\|{\mathbb I}_q-{\mathbb I}_{n_r}\|}.
\end{align}
where ${c_{\ast}},\ast\in\{n_r,n_t\}$ denotes the radiation profile of the corresponding antenna element, i.e., $c_{\ast}=1$ for full-digital or hybrid digital-analog arrays \cite{elbir2024terahertz,elbir2023twenty,elbir2021terahertz,elbir2023nba} and $0<c_{\ast}<1$ for hybrid digital-holographic array \cite{xu2024near}. Hereafter, for simplicity, we only consider the full-digital case. 

After I/Q demodulation and matched filtering, we can sample once per snapshot at slow-time index $m$. Based on \textbf{A3}, the Doppler-induced phase is approximately constant within snapshot $m$ and can be evaluated at $t=mT_{\mathrm{sym}}$ \cite{richards2010principles}. The resulting discrete-time measurement at the $n_r$-th Rx element is
\begin{align}\label{eq:NB_rx_discrete_scalar}
y_{n_r}(m)
=
&\sum_{q=1}^{Q}\alpha_q
\sum_{n_t=1}^{N_t}
g_{n_t}^{(q)}g_{n_r}^{(q)}
x_{n_t}(m) \nonumber\\
&\cdot e^{-j2\pi f_c\tau_{n_tn_r}^{(q)}}
e^{j2\pi f_{n_tn_r}^{(q)}mT_{\mathrm{sym}}}
+n_{n_r}(m),
\end{align}
where $n_{n_r}(m)\sim\mathcal{CN}(0,\sigma_r^2)$.
Stacking the receiving signal in \eqref{eq:NB_rx_discrete_scalar}, we have 
\begin{align}\label{eq:NB_rx_vector}
{\bf y}(m)
=\sum_{q=1}^{Q}\alpha_q{\bf a}_{\mathrm{R}}(m,q){\bf a}_{\mathrm{T}}^{T}(m,q){\bf x}(m)+{\bf n}(m)
\end{align}
where ${\bf y}(m)\in\mathbb{C}^{N_r\times 1}$, ${\bf n}(m)\sim\mathcal{CN}({\bf 0},\sigma_r^2{\bf I}_{N_r}),$
and the narrow-band near-field steering vectors are
\begin{subequations} \label{eq:NB_steering}
\begin{align}
{\bf a}_{\mathrm{T}}(m,q)
&=
\Big[g_{1}^{(q)}e^{\varphi_{1}(m,q)},\ldots,g_{N_t}^{(q)}e^{\varphi_{N_t}(m,q)}\Big]^T,  \label{eq:NB_steering_Tx}\\
{\bf a}_{\mathrm{R}}(m,q)
&=
\Big[g_{1}^{(q)}e^{\varphi_{1}(m,q)},\ldots,g_{N_r}^{(q)}e^{\varphi_{N_r}(m,q)}\Big]^T, \label{eq:NB_steering_Rx}
\end{align}
\end{subequations}
with
\begin{small}
\begin{subequations} \label{eq:NB_phase_TxRx}
\begin{flalign}
\varphi_{n_t}(m,q)
&\!=\!\frac{j2\pi f_c}{c}\left(
\frac{\langle{\bm v}^{(q)},{\mathbb I}_q\!-\!{\mathbb I}_{n_t}\rangle}{\|{\mathbb I}_q\!-\!{\mathbb I}_{n_t}\|}mT_{\mathrm{sym}}
\!-\!\|{\mathbb I}_q\!-\!{\mathbb I}_{n_t}\|\!
\right),\\
\varphi_{n_r}(m,q)
&\!=\!\frac{j2\pi f_c}{c}\left(
\frac{\langle{\bm v}^{(q)},{\mathbb I}_q\!-\!{\mathbb I}_{n_r}\rangle}{\|{\mathbb I}_q\!-\!{\mathbb I}_{n_r}\|}mT_{\mathrm{sym}}
\!-\!\|{\mathbb I}_q\!-\!{\mathbb I}_{n_r}\|\!
\right).
\end{flalign}
\end{subequations}
\end{small}
Defining ${\bf B}=\mathrm{diag}(\alpha_1,\ldots,\alpha_Q)$, ${\bf A}_{\mathrm{Tx},m}=[{\bf a}_{\mathrm{T}}(m,1),\ldots,{\bf a}_{\mathrm{T}}(m,Q)]$, and ${\bf A}_{\mathrm{Rx},m}=[{\bf a}_{\mathrm{R}}(m,1),\ldots,{\bf a}_{\mathrm{R}}(m,Q)]$, \eqref{eq:NB_rx_vector} can be compactly rewritten as
\begin{align}\label{eq:NB_compact}
{\bf y}(m)
=
{\bf A}_{\mathrm{Rx},m}{\bf B}{\bf A}_{\mathrm{Tx},m}^{T}{\bf x}(m)
+{\bf n}(m).
\end{align}
We can further stack all $M$ snapshots yields
\begin{align}\label{eq:NB_stack}
{\bf y}=[{\bf y}(1)^T,\ldots,{\bf y}(m)^T]^T
=
{\bf A}{\bf x}+{\bf n},
\end{align}
where ${\bf x}=[{\bf x}(1)^T,\ldots,{\bf x}(m)^T]^T$, ${\bf n}=[{\bf n}(1)^T,\ldots,{\bf n}(m)^T]^T$, and ${\bf A}=\mathrm{bdiag}({\bf A}_1,\ldots,{\bf A}_M)$ with ${\bf A}_m={\bf A}_{\mathrm{Rx},m}{\bf B}{\bf A}_{\mathrm{Tx},m}^{T}$. 
Meanwhile, we collect all unknown real-valued parameters as
\begin{align}\label{eq:NB_param}
{\bm\theta}=[{\bm x},{\bm y},{\bm v}_x,{\bm v}_y,\bm\alpha_{R},\bm\alpha_{I}]^T\in \mathbb{R}^{6Q\times1},
\end{align}
where ${\bm x}=[x_1,\ldots,x_Q]$, ${\bm y}=[y_1,\ldots,y_Q]$, ${\bm v}_x=[v_x^{(1)},\ldots,v_x^{(Q)}]$, ${\bm v}_y=[v_y^{(1)},\ldots,v_y^{(Q)}]$, and $\bm\alpha_{R}$ and $\bm\alpha_{I}$ collect the real and imaginary parts of $\{\alpha_q\}_{q=1}^Q$.

\begin{remark}
In the narrow-band near-field sensing system, the effective signal bandwidth $B$ is assumed to be much smaller than the carrier frequency $f_c$, so the propagation delay associated with each target can be absorbed into a carrier-phase rotation at $f_c$. Consequently, the observation does not explicitly resolve the delay dimension, and the extractable information is primarily carried by: (1) the near-field spherical-wavefront phase variation across the array aperture; (2) the slow-time phase evolution induced by Doppler. This renders narrow-band sensing largely aperture-limited, and targets with similar bistatic ranges may be harder to separate compared to the wideband case which will be discussed in \textbf{Part II}.
\end{remark}
\section{Narrow-band CRB Analysis}
To evaluate the parameter estimation performance, the CRB has been widely utilized as an estimation lower bound~\cite{stoica1990performance,liang2021CramerRao}. 
In this section, we derive the narrow-band near-field CRBs for joint estimation of position, velocity, and RCS of moving targets. 
Let us first rewrite the stacked receive model in real form as
$\bar{\bf y}=[\Re\{({\bf y})\}^T;\Im\{({\bf y})\}^T]^T$,
$\bar{\bm\mu}=[\Re\{({\bm\mu})\}^T;\Im\{({\bm\mu})\}^T]^T$, and
$\bar{\bf C}=\frac{\sigma_r^2}{2}{\bf I}_{2N_rM}$,
where ${\bm\mu}$ and ${\bf C}$ denote the mean and covariance of ${\bf y}$, respectively, i.e., ${\bf y}\sim({\bm\mu},{\bf C})$. 
Under the assumption \textbf{A2} and additive white Gaussian noise (AWGN) assumption\footnote{Note that \eqref{dist._rx_NB} holds due to the AWGN assumption. If this assumption does not hold, i.e., the noise is not circular or white, additional parameters are required to characterize ${\bf C}$, which is beyond the scope of this paper and is left for future work.}, we have 
\begin{subequations} \label{dist._rx_NB}
\begin{align}
 {\bm\mu} &= {\bf A}{\bf x}, \label{statistic1_NB}\\
 {\bf C}  &= \mathbb{E}\{({\bf y}-{\bm\mu})({\bf y}-{\bm\mu})^H\}=\sigma_r^2{\bf I}_{N_r M}. \label{statistic2_NB}
\end{align}
\end{subequations}
This likelihood directly enables the FIM and CRB derivations in the sequel.
To evaluate the CRB, we can write the likelihood function as~\cite{stoica2005spectral}
\begin{small}
\begin{align}\label{likelihood_NB}
 p(\bar{\bf y},{\bm\theta}) = \frac{1}
 {\sqrt{2\pi^{2{N}_rM} |\bar{\bf C}|}}
 \exp\!\left(-\frac{(\bar{\bf y}-\bar{\bm\mu})^T\bar{\bf C}^{-1}(\bar{\bf y}-\bar{\bm\mu})}{2}\right).
\end{align}
\end{small}
\noindent\hspace{-0.5em} According to \eqref{likelihood_NB}, the log-likelihood function is
\begin{flalign}\label{likelihood_log_NB}
\ln p 
 \!=\! -\frac{1}{2}\Big(\ln|\bar{\bf C}|+
 (\bar{\bf y}-\bar{\bm\mu})^T\bar{\bf C}^{-1}(\bar{\bf y}-\bar{\bm\mu})\Big),
\end{flalign}
where the constant term $MN_r\ln(2\pi)$ is omitted hereafter. 
Based on the above, the near-field FIM is given by~\cite{stoica2005spectral}
\begin{align}\label{FIM_gen_NB}
 {\bf F} = -\mathbb{E}\left\{ \frac{\partial^2\ln p}{\partial{\bm\theta}\,\partial{\bm\theta}^T}\right\},
\end{align}
where $\mathbb{E}\{\cdot\}$ denotes the expectation operator.
Based on the Slepian--Bangs formula, the $(i,j)$-th entry of the FIM can be written (see, e.g., (B.3.25) in~\cite{stoica2005spectral}) as
\begin{small}
\begin{flalign}\label{FIM_spe_NB}
&\hspace*{-0.5em}{\bf F}(i,j) \!=\!\Tr\!\left({\bf C}^{-1}\frac{\partial{\bf C}}{\partial{\theta}_i}{\bf C}^{-1}\frac{\partial{\bf C}}{\partial{\theta}_j}\right)
\!+\!2\Re\!\left[\left(\frac{\partial{\bm\mu}}{\partial{\theta}_i}\right)^{\!H}{\bf C}^{-1}\frac{\partial{\bm\mu}}{\partial{\theta}_j}\right],
\end{flalign}
\end{small}
\noindent\hspace{-0.5em} in which ${\bf C}=\sigma_r^2{\bf I}_{N_rM}$ does not depend on ${\bm\theta}$, hence $\frac{\partial{\bf C}}{\partial{\theta}_i}={\bf 0}$ and the first trace term vanishes. 
Therefore, \eqref{FIM_spe_NB} can be further simplified to
\begin{align}\label{FIM_spe_simplified_NB}
{\bf F}(i,j)=\frac{2}{\sigma_r^2}\Re\!\left[\left(\frac{\partial{\bm\mu}}{\partial{\theta}_i}\right)^{\!H}\frac{\partial{\bm\mu}}{\partial{\theta}_j}\right].
\end{align}
It follows that the key step in FIM evaluation is computing $\frac{\partial{\bm\mu}}{\partial{\theta}_i}$ for each parameter.
From ${\bm\mu}={\bf A}{\bf x}$ with ${\bf A}=\mathrm{bdiag}[{\bf A}_1,\ldots,{\bf A}_M]$, we have
\begin{align} \label{derivatives_mu_NB}
\frac{\partial{\bm\mu}}{\partial{\theta}_i}
&=
\left[\begin{matrix}
\frac{\partial({\bf A}_1{\bf x}(1))}{\partial{\theta}_i}\\
\vdots\\
\frac{\partial({\bf A}_M{\bf x}_M)}{\partial{\theta}_i}
\end{matrix}\right], \quad
\frac{\partial{\bm\mu}}{\partial{\theta}_j}
=
\left[\begin{matrix}
\frac{\partial({\bf A}_1{\bf x}(1))}{\partial{\theta}_j}\\
\vdots\\
\frac{\partial({\bf A}_M{\bf x}_M)}{\partial{\theta}_j}
\end{matrix}\right]. 
\end{align}
Then, the full FIM can be partitioned according to the structure of ${\bm\theta}$ as
\begin{small}
\begin{align}\label{FIM_block_NB}
\hspace*{-1.0em}{\bf F}  
\!=\!&\left[\begin{matrix}  
 {\bf F}_{{\bf x}{\bf x}} \!&\! {\bf F}_{{\bf x}{\bf y}} \!&\! {\bf F}_{{\bf x}{\bf v}_x} \!&\! {\bf F}_{{\bf x}{\bf v}_y} \!&\! {\bf F}_{{\bf x}{\bm\alpha}_R} \!&\! {\bf F}_{{\bf x}{\bm\alpha}_I}  \\
 {\bf F}_{{\bf y}{\bf x}} \!&\! {\bf F}_{{\bf y}{\bf y}} \!&\! {\bf F}_{{\bf y}{\bf v}_x} \!&\! {\bf F}_{{\bf y}{\bf v}_y} \!&\! {\bf F}_{{\bf y}{\bm\alpha}_R} \!&\! {\bf F}_{{\bf y}{\bm\alpha}_I}\\
  {\bf F}_{{\bf v}_x{\bf x}} \!&\! {\bf F}_{{\bf v}_x{\bf y}} \!&\! {\bf F}_{{\bf v}_x{\bf v}_x} \!&\! {\bf F}_{{\bf v}_x{\bf v}_y} \!&\! {\bf F}_{{\bf v}_x{\bm\alpha}_R} \!&\! {\bf F}_{{\bf v}_x{\bm\alpha}_I}\\
  {\bf F}_{{\bf v}_y{\bf x}} \!&\! {\bf F}_{{\bf v}_y{\bf y}} \!&\! {\bf F}_{{\bf v}_y{\bf v}_x} \!&\! {\bf F}_{{\bf v}_y{\bf v}_y} \!&\! {\bf F}_{{\bf v}_y{\bm\alpha}_R} \!&\! {\bf F}_{{\bf v}_y{\bm\alpha}_I}\\
  {\bf F}_{{\bm\alpha}_R{\bf x}} \!&\! {\bf F}_{{\bm\alpha}_R{\bf y}} \!&\! {\bf F}_{{\bm\alpha}_R{\bf v}_x} \!&\! {\bf F}_{{\bm\alpha}_R{\bf v}_y} \!&\! {\bf F}_{{\bm\alpha}_R{\bm\alpha}_R} \!&\! {\bf F}_{{\bm\alpha}_R{\bm\alpha}_I}\\
  {\bf F}_{{\bm\alpha}_I{\bf x}} \!&\! {\bf F}_{{\bm\alpha}_I{\bf y}} \!&\! {\bf F}_{{\bm\alpha}_I{\bf v}_x} \!&\! {\bf F}_{{\bm\alpha}_I{\bf v}_y} \!&\! {\bf F}_{{\bm\alpha}_I{\bm\alpha}_R} \!&\! {\bf F}_{{\bm\alpha}_I{\bm\alpha}_I}\\
    \end{matrix}\right],
\end{align}
\end{small}
\noindent\hspace{-1em} where the diagonal blocks correspond to the information on position, velocity, and RCS parameters, and the off-diagonal blocks capture their mutual coupling. 
For an unbiased estimator, the CRB matrix is given by the inverse FIM, i.e., ${\mathrm {CRB}}={\bf F}^{-1}$. 
When focusing on a subset of parameters (e.g., position only), we use Schur-complement techniques~\cite{dokhanchi2019mmWave} to extract the corresponding sub-block of ${\bf F}^{-1}$, which yields the tightest CRB for that subset.

\vspace{-0.2cm}
\subsection{Conditional CRB for RCS Estimation}
\vspace{-0.1cm}
We first recall the complex-valued partial-derivative identities \cite{hjrungnes2011complex}
\begin{flalign}\label{FIM_Derivation_NB}
  &\frac{\partial(a+jb)e^{jc}}{\partial a}=e^{jc},~~
  \frac{\partial(a+jb)e^{jc}}{\partial b}=je^{jc}.
\end{flalign}
Then, the partial derivatives with respect to the real and imaginary parts of the $q$-th reflectivity coefficient are
\begin{subequations}\label{FIM_Derivation_RCS_NB}
\begin{align}
 \frac{\partial{\bf A}_{m}{\bf x}_{m}}{\partial\alpha_{R_q}}
 &= {\bf a}_{\mathrm{R}}(m,q){\bf a}_{\mathrm{T}}^{T}(m,q){\bf x}_{m}, \\
 \frac{\partial{\bf A}_{m}{\bf x}_{m}}{\partial\alpha_{I_q}}
 &= j\,{\bf a}_{\mathrm{R}}(m,q){\bf a}_{\mathrm{T}}^{T}(m,q){\bf x}_{m},
\end{align}
\end{subequations}
for $q=1,\ldots,Q$.
Using the Cauchy-Riemann relations \cite{hjrungnes2011complex}, we obtain
\begin{subequations}\label{FIM_RCS_NB}
\begin{align}
  \frac{\partial{\bf A}_{m}{\bf x}_{m}}{\partial{\bm\alpha}_{R}}
  &= [\breve{\bf A}_{m}(1){\bf x}_{m},\ldots,\breve{\bf A}_{m}(Q){\bf x}_{m}],\\
  \frac{\partial{\bf A}_{m}{\bf x}_{m}}{\partial{\bm\alpha}_{I}}
  &= j[\breve{\bf A}_{m}(1){\bf x}_{m},\ldots,\breve{\bf A}_{m}(Q){\bf x}_{m}],
\end{align}
\end{subequations}
where $\breve{\bf A}_{m}(q)={\bf A}_{\mathrm{R},m}{\bf \Lambda}_{q}{\bf A}_{\mathrm{T},m}^{T}$ and ${\bf \Lambda}_{q}$ is the selection matrix whose $q$-th diagonal entry is one and the others are zero.
Substituting \eqref{FIM_RCS_NB} into the narrow-band FIM expression yields
\begin{small}
\begin{align}\label{CRB_RCS_NB}
&\hspace*{-0.5em}\mathrm{CRB}_{\alpha_{R_q}}\!=\!\mathrm{CRB}_{\alpha_{I_q}}
\!=\!\frac{\sigma_r^2}{2\,\mathbb{E}\!\left\{\sum_{m=1}^{M}\left\|
{\bf a}_{\mathrm{R}}(m,q){\bf a}_{\mathrm{T}}^{T}(m,q){\bf x}_{m}
\right\|_2^2\right\}}.
\end{align}
\end{small}
%
\vspace{-0.2cm}
\subsection{Conditional CRB for Velocity Estimation}
\vspace{-0.1cm}
Similarly, the velocity-related derivatives are given by
\begin{subequations}\label{Derivation_vel_NB}
\begin{align}
  \frac{\partial{\bf A}_{m}{\bf x}_{m}}{\partial{\bf v}_x}
  &= [\acute{\bf A}_{m}(1){\bf x}_{m},\ldots,\acute{\bf A}_{m}(Q){\bf x}_{m}], \\
  \frac{\partial{\bf A}_{m}{\bf x}_{m}}{\partial{\bf v}_y}
  &= [\grave{\bf A}_{m}(1){\bf x}_{m},\ldots,\grave{\bf A}_{m}(Q){\bf x}_{m}],
\end{align}
\end{subequations}
where
\begin{align*}
\acute{\bf A}_{m}(q)
&=\big[{\acute{\bf A}_{\mathrm{R},m}}{\bf \Lambda}_{q}{\bf B}{\bf A}_{\mathrm{T},m}^{T}
+{\bf A}_{\mathrm{R},m}{\bf B}{\bf \Lambda}_{q}{\acute{\bf A}_{\mathrm{T},m}^{T}}\big],\\
\grave{\bf A}_{m}(q)
&=\big[{\grave{\bf A}_{\mathrm{R},m}}{\bf \Lambda}_{q}{\bf B}{\bf A}_{\mathrm{T},m}^{T}
+{\bf A}_{\mathrm{R},m}{\bf B}{\bf \Lambda}_{q}{\grave{\bf A}_{\mathrm{T},m}^{T}}\big],
\end{align*}
denote the partial derivative of channel matrix w.r.t $v_{x_q}$ and $v_{y_q}$, respectively, and  each element of the steering-vector derivative is (for $\ast\in\{n_t,n_r\}$)
\begin{align*}
\frac{\partial{\bf a}_{\mathrm{R},\mathrm{T},\ast}(m,q)}{\partial v_{x_q}}
&=
\frac{j2\pi\frac{f_c}{c}\,mT_{\mathrm{sym}}(x_q-x_{\ast})}{\|{\mathbb I}_q-{\mathbb I}_{\ast}\|}
\,{\bf a}_{\mathrm{R},\mathrm{T},\ast}(m,q),\\
\frac{\partial{\bf a}_{\mathrm{R},\mathrm{T},\ast}(m,q)}{\partial v_{y_q}}
&=
\frac{j2\pi\frac{f_c}{c}\,mT_{\mathrm{sym}}(y_q-y_{\ast})}{\|{\mathbb I}_q-{\mathbb I}_{\ast}\|}
\,{\bf a}_{\mathrm{R},\mathrm{T},\ast}(m,q).
\end{align*}
Substituting \eqref{Derivation_vel_NB} into the narrow-band FIM \eqref{FIM_block_NB}, the CRB for velocity estimation as
\begin{subequations}\label{CRB_vel_q_NB}
\begin{align}
 \mathrm{CRB}_{v_{x_q}}
 &=\frac{\sigma_r^2}{2\,\mathbb{E}\!\left\{\sum_{m=1}^{M}\|\acute{\bf A}_{m}(q){\bf x}_{m}\|_2^2\right\}},\\
 \mathrm{CRB}_{v_{y_q}}
 &=\frac{\sigma_r^2}{2\,\mathbb{E}\!\left\{\sum_{m=1}^{M}\|\grave{\bf A}_{m}(q){\bf x}_{m}\|_2^2\right\}}.
\end{align}
\end{subequations}
%
\subsection{Conditional CRB for Location Estimation}
\vspace{-0.1cm}
Meanwhile, the location-related derivatives are
\begin{subequations}\label{Derivation_loc_NB}
\begin{align}
  \frac{\partial{\bf A}_{m}{\bf x}_{m}}{\partial{\bf x}}
  &= [\dot{\bf A}_{m}(1){\bf x}_{m},\ldots,\dot{\bf A}_{m}(Q){\bf x}_{m}],\\
  \frac{\partial{\bf A}_{m}{\bf x}_{m}}{\partial{\bf y}}
  &= [\ddot{\bf A}_{m}(1){\bf x}_{m},\ldots,\ddot{\bf A}_{m}(Q){\bf x}_{m}],
\end{align}
\end{subequations}
where
\begin{align*}
\dot{\bf A}_{m}(q)
&=\big[{\dot{\bf A}_{\mathrm{R},m}}{\bf \Lambda}_{q}{\bf B}{\bf A}_{\mathrm{T},m}^{T}
+{\bf A}_{\mathrm{R},m}{\bf B}{\bf \Lambda}_{q}{\dot{\bf A}_{\mathrm{T},m}^{T}}\big],\\
\ddot{\bf A}_{m}(q)
&=\big[{\ddot{\bf A}_{\mathrm{R},m}}{\bf \Lambda}_{q}{\bf B}{\bf A}_{\mathrm{T},m}^{T}
+{\bf A}_{\mathrm{R},m}{\bf B}{\bf \Lambda}_{q}{\ddot{\bf A}_{\mathrm{T},m}^{T}}\big].
\end{align*}
Each element of the derivative w.r.t.\ $x_q$ and $y_q$ can be written as
$\frac{\partial{\bf a}_{\mathrm{R},\mathrm{T},\ast}(m,q)}{\partial x_q}
= \dot g_{\mathrm{R},\mathrm{T},\ast}(m,q)\,{\bf a}_{\mathrm{R},\mathrm{T},\ast}(m,q)$
and
$\frac{\partial{\bf a}_{\mathrm{R},\mathrm{T},\ast}(m,q)}{\partial y_q}
= \ddot g_{\mathrm{R},\mathrm{T},\ast}(m,q)\,{\bf a}_{\mathrm{R},\mathrm{T},\ast}(m,q)$,
where $\dot g_{\mathrm{R},\mathrm{T},\ast}(m,q)$ and $\ddot g_{\mathrm{R},\mathrm{T},\ast}(m,q)$
by \eqref{derivation_x} arranged at the top of next page
\begin{figure*}[!t]
\begin{small}
\begin{subequations} \label{derivation_x}
\begin{align} 
  &\dot{g}_{\mathrm{R},\mathrm{T},\ast}(m,q) \!=\!\left(\frac{j2\pi\frac{f_c}{c}(v_{x}^{(q)}mT_{\mathrm{sym}}\!-\!(x_q\!-\!{x}_{\ast}))}{\|{\mathbb I}_{q}-{\mathbb I}_{\ast}\|}-\frac{{x}_q-{x}_{\ast}}{{\|{\mathbb I}_{q}-{\mathbb I}_{\ast}\|^2}} -\frac{j2\pi\frac{f_c}{c}mT_{\mathrm{sym}}(v_{x}^{(q)}({x}_q-{x}_{\ast})^2\!+\!v_{y}^{(q)}({x}_q\!-\!{x}_{\ast})({y}_q\!-\!{y}_{\ast}))}{\|{\mathbb I}_{q}-{\mathbb I}_{\ast}\|^3}\right). & \\
  &\ddot{g}_{\mathrm{R},\mathrm{T},\ast}(m,q) \!=\!\left(\frac{j2\pi\frac{f_c}{c}(v_{y}^{(q)}mT_{\mathrm{sym}}\!-\!(y_q\!-\!{y}_{\ast}))}{\|{\mathbb I}_{q}-{\mathbb I}_{\ast}\|}-\frac{{y}_q-{y}_{\ast}}{{\|{\mathbb I}_{q}-{\mathbb I}_{\ast}\|^2}} -\frac{j2\pi\frac{f_c}{c}mT_{\mathrm{sym}}(v_{y}^{(q)}({y}_q-{y}_{\ast})^2\!+\!v_{x}^{(q)}({x}_q\!-\!{x}_{\ast})({y}_q\!-\!{y}_{\ast}))}{\|{\mathbb I}_{q}-{\mathbb I}_{\ast}\|^3}\right). &  
\end{align}
\end{subequations}
\end{small}
\hrulefill
\end{figure*}

Substituting \eqref{Derivation_loc_NB} into the narrow-band \eqref{FIM_block_NB}, we have the CRB for location estimation as
\begin{subequations}\label{CRB_loc_q_NB}
\begin{align}
 \mathrm{CRB}_{x_q}
 &=\frac{\sigma_r^2}{2\,\mathbb{E}\!\left\{\sum_{m=1}^{M}\|\dot{\bf A}_{m}(q){\bf x}_{m}\|_2^2\right\}},\\
 \mathrm{CRB}_{y_q}
 &=\frac{\sigma_r^2}{2\,\mathbb{E}\!\left\{\sum_{m=1}^{M}\|\ddot{\bf A}_{m}(q){\bf x}_{m}\|_2^2\right\}}.
\end{align}
\end{subequations}
Recalling \eqref{iso_trans}, we can simplify and rewrite the CRB for RCS, velocity, and location as \eqref{CRB_q_single} which arranged at the top of next page. 
\begin{figure*}[!t]
\begin{small}
\begin{subequations} \label{CRB_q_single}
\begin{flalign}
&\mathrm{CRB}_{\alpha_{R_{q}}} \!=\!\mathrm{CRB}_{\alpha_{I_{q}}} =\frac{\sigma_r^2}{2\mathcal{P}{\sum}_{k=1}^K{\sum}_{m=1}^M\|{\bf a}_{\mathrm{R}}(m,q)\|_2^2\|{\bf a}_{\mathrm{T}}^{T}(m,q)\|_2^2},\\
 &\mathrm{CRB}_{v_{x}} \!=\!\frac{\sigma_r^2}{2|\alpha|^2\{\sum_{m=1}^M    (\|\acute{\bf a}_{\mathrm R}(m)\|_2^2\|{\bf a}_{\mathrm T}(m)\|_2^2\!+\!2\Re{\{\acute{\bf a}_{\mathrm R}^H(m){\bf a}_{\mathrm R}(m)}{\bf a}_{\mathrm T}^H(m)\acute{\bf a}_{\mathrm T}(m)\}\!+\!\|{\bf a}_{\mathrm R}(m)\|_2^2\|\acute{\bf a}_{\mathrm T}(m)\|_2^2) \}},     \label{CRB_vel.q1_single}\\
 &\mathrm{CRB}_{v_{y}}  \!=\!\frac{\sigma_r^2}{2|\alpha|^2\{\sum_{m=1}^M    (\|\grave{\bf a}_{\mathrm R}(m)\|_2^2\|{\bf a}_{\mathrm T}(m)\|_2^2\!+\!2\Re{\{\grave{\bf a}_{\mathrm R}^H(m){\bf a}_{\mathrm R}(m)}{\bf a}_{\mathrm T}^H(m)\grave{\bf a}_{\mathrm T}(m)\}\!+\!\|{\bf a}_{\mathrm R}(m)\|_2^2\|\grave{\bf a}_{\mathrm T}(m)\|_2^2) \}}, \label{CRB_vel.q2_single} \\
 &\mathrm{CRB}_{x} \!=\! \frac{\sigma_r^2}{2|\alpha|^2\{\sum_{m=1}^M    (\|\dot{\bf a}_{\mathrm R}(m)\|_2^2\|{\bf a}_{\mathrm T}(m)\|_2^2+2\Re{\{\dot{\bf a}_{\mathrm R}^H(m){\bf a}_{\mathrm R}(m)}{\bf a}_{\mathrm T}^H(m)\dot{\bf a}_{\mathrm T}(m)\}+\|{\bf a}_{\mathrm R}(m)\|_2^2\|\dot{\bf a}_{\mathrm T}(m)\|_2^2) \}},     \label{CRB_loc.q1_single}\\
 &\mathrm{CRB}_{y} \!=\!\frac{\sigma_r^2}{2|\alpha|^2\{\sum_{m=1}^M    (\|\ddot{\bf a}_{\mathrm R}(m)\|_2^2\|{\bf a}_{\mathrm T}(m)\|_2^2+2\Re{\{\ddot{\bf a}_{\mathrm R}^H(m){\bf a}_{\mathrm R}(m)}{\bf a}_{\mathrm T}^H(m)\ddot{\bf a}_{\mathrm T}(m)\}+\|{\bf a}_{\mathrm R}(m)\|_2^2\|\ddot{\bf a}_{\mathrm T}(m)\|_2^2) \}}. \label{CRB_loc.q2_single}
\end{flalign}
\end{subequations}
\end{small}
\hrulefill
\end{figure*}
With the general CRB expressions given in \eqref{CRB_q_single}, we can then explore the narrow-band near-field CRB for the special cases. 
\vspace{-0.2cm}
\subsection{Approximated CRB for Single Target Case}
\vspace{-0.1cm}
Most existing CRB results in the far-field MIMO radar literature and in narrowband near-field localization primarily focus on simplified settings, e.g., static objects or a subset of parameters \cite{liang2021CramerRao}, which limits their applicability to dynamic near-field sensing scenarios. Motivated by this, this section investigates tractable CRB approximations for the moving target under a narrowband near-field snapshot model. Hereafter, we use the abbreviations FF and NF to denote the far-field and near-field regimes, respectively.
\subsubsection{RCS Estimation}
We can first define the gains of Tx and Rx, respectively, as
\begin{subequations} \label{gain_Tx_Rx}
 \begin{align} 
   G_{\mathrm{Tx}} &\!=\! \|{\bf a}_{\mathrm{T}}(m,q)\|_2^2 = \frac{\lambda_c^2}{16\pi^2}\sum_{n_t=1}^{N_t}\frac{1}{r_{n_t}^2} = \frac{\lambda_c^2}{16\pi^2}g_{\mathrm {Tx}}, \label{gain_Tx}\\ 
   G_{\mathrm{Rx}} &\!=\! \|{\bf a}_{\mathrm{R}}(m,q)\|_2^2 = \frac{\lambda_c^2}{16\pi^2}\sum_{n_r=1}^{N_r}\frac{1}{r_{n_r}^2} = \frac{\lambda_c^2}{16\pi^2}g_{\mathrm {Rx}}, \label{gain_Rx}
 \end{align}   
\end{subequations}
where $r_{n_t}=\|{\mathbb I}_{q}-{\mathbb I}_{n_t}\|$ and $r_{n_r}=\|{\mathbb I}_{q}-{\mathbb I}_{n_r}\|$. 
Then, we have 
\begin{flalign} \label{CRB_RCS}
\mathrm{CRB}_{\alpha_{R}}=\mathrm{CRB}_{\alpha_{I}}=\frac{256\sigma_r^2\pi^{4}}{2\mathcal{P}Mg_{\mathrm {Tx}}g_{\mathrm {Rx}}\lambda_c^4}. 
\end{flalign} 
Moreover, we denote $\mathrm{CRB}_{\alpha}=\mathrm{CRB}_{\alpha_{R}}+\mathrm{CRB}_{\alpha_{I}}$ by the CRB for RCS estimation. 
\begin{figure}[t]
\centering{\includegraphics[width=1\columnwidth]{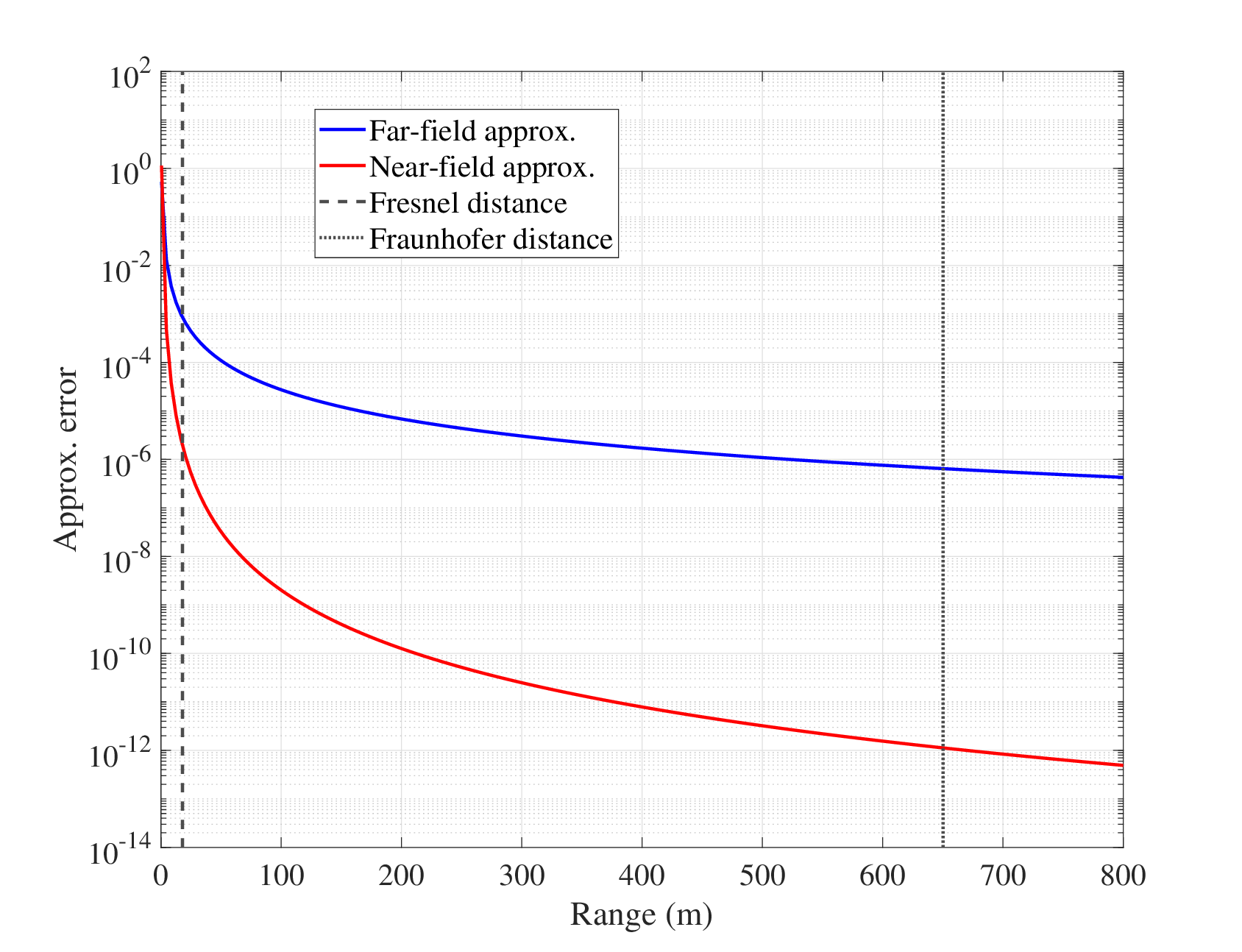}}
\vspace{-0.2cm}\caption{Approximation error versus target range, ($N_t=256$).
\label{range}}
\end{figure}
\begin{figure}[t]
\centering{\includegraphics[width=1\columnwidth]{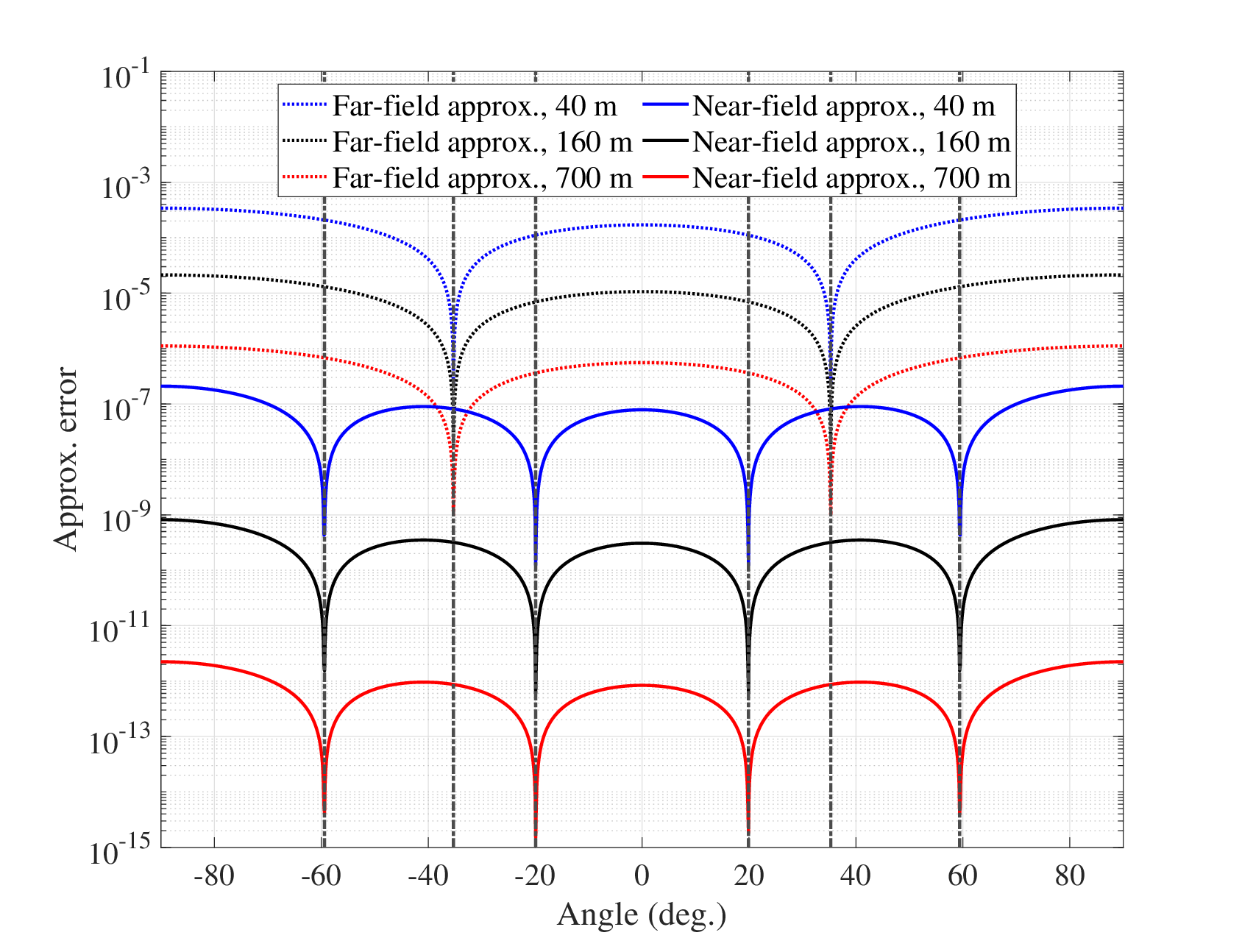}}
\vspace{-0.2cm}\caption{Approximation error versus target angle, ($N_t=256$)
\label{angle}}
\end{figure}
\begin{theorem}
 For a single target with co-located Tx/Rx arrays along with $x$-axis centered at the origin, CRBs for RCS estimation, in terms of far-field and near-field cases, are, respectively, given by
\begin{subequations} \label{CRB_RCS.1}
\begin{flalign} 
&\mathrm{CRB}_{\alpha}^{\mathrm{FF}}  \approx \frac{256\sigma_r^2\pi^{4}(r_{\mathrm{Tx}}^{\mathrm{FF}}r_{\mathrm{Rx}}^{\mathrm{FF}})^2}{\mathcal{P}M{N_t}{N_r}\lambda_c^4}, \label{CRB_RCS.1.1}\\
&\mathrm{CRB}_{\alpha}^{\mathrm{NF}} \! \approx \!\frac{256\sigma_r^2\pi^{4}(r_{\mathrm{Tx}}^{\mathrm{NF}}r_{\mathrm{Rx}}^{\mathrm{NF}})^2}{\mathcal{P}M{N_t}{N_r}(1+\Delta_{\mathrm{Tx}})(1+\Delta_{\mathrm{Rx}})\lambda_c^4}, \label{CRB_RCS.1.2}
\end{flalign} 
\end{subequations}
where  $r_{\mathrm{Tx}}^{\mathrm{FF}}$ and $r_{\mathrm{Rx}}^{\mathrm{FF}}$ denote the range between the far-field target and reference element of Tx and Rx, respectively, and $\Delta_{\mathrm{Tx}}=\frac{(N_t^{2}-1)d_t^2(4\sin^2\theta_{\mathrm{Tx},q} - 1)}{12 r_{\mathrm{Tx}}^2}$ and $\Delta_{\mathrm{Rx}}=\frac{(N_r^{2}-1)d_r^2(4\sin^2\theta_{\mathrm{Rx},q} - 1)}{12 r_{\mathrm{Rx}}^2}$. 
\end{theorem}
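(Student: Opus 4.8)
The plan is to start from the exact single-target RCS bound \eqref{CRB_RCS}, in which the only geometry-dependent quantities are the aggregate gains $g_{\mathrm{Tx}}=\sum_{n_t=1}^{N_t}1/r_{n_t}^2$ and $g_{\mathrm{Rx}}=\sum_{n_r=1}^{N_r}1/r_{n_r}^2$ introduced in \eqref{gain_Tx_Rx}. Since $\mathrm{CRB}_\alpha=\mathrm{CRB}_{\alpha_R}+\mathrm{CRB}_{\alpha_I}=256\sigma_r^2\pi^4/(\mathcal{P}M\,g_{\mathrm{Tx}}g_{\mathrm{Rx}}\lambda_c^4)$, both claimed expressions reduce to evaluating these two sums under the far-field and near-field approximations, respectively. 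Because the Tx and Rx factors are structurally identical, I would carry out the computation for $g_{\mathrm{Tx}}$ and transfer the result to $g_{\mathrm{Rx}}$ verbatim by symmetry.

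For the far-field case I would invoke the plane-wave regime, in which all per-element ranges collapse to the reference range, $r_{n_t}\approx r_{\mathrm{Tx}}^{\mathrm{FF}}$, so that $g_{\mathrm{Tx}}\approx N_t/(r_{\mathrm{Tx}}^{\mathrm{FF}})^2$ and likewise $g_{\mathrm{Rx}}\approx N_r/(r_{\mathrm{Rx}}^{\mathrm{FF}})^2$. Substituting into \eqref{CRB_RCS} immediately yields \eqref{CRB_RCS.1.1}.

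The near-field case is the substantive step. Placing the ULA along the $x$-axis centered at the origin, I would denote the $n_t$-th element coordinate by $x_{n_t}$ and use the exact relation $r_{n_t}^2=r_{\mathrm{Tx}}^2-2x_q x_{n_t}+x_{n_t}^2$, where $r_{\mathrm{Tx}}$ is the range from the array center and $x_q=r_{\mathrm{Tx}}\sin\theta_{\mathrm{Tx},q}$. Factoring out $r_{\mathrm{Tx}}^2$ and expanding the reciprocal $(1-u)^{-1}$ with $u=2\sin\theta_{\mathrm{Tx},q}\,x_{n_t}/r_{\mathrm{Tx}}-x_{n_t}^2/r_{\mathrm{Tx}}^2$ to second order in the small ratio $x_{n_t}/r_{\mathrm{Tx}}$, the quadratic contribution combines the direct $-x_{n_t}^2/r_{\mathrm{Tx}}^2$ term with the square of the linear displacement $4\sin^2\theta_{\mathrm{Tx},q}\,x_{n_t}^2/r_{\mathrm{Tx}}^2$, producing the coefficient $(4\sin^2\theta_{\mathrm{Tx},q}-1)$. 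Summing over the aperture, the odd (linear-in-$x_{n_t}$) term cancels by the centering symmetry $\sum_{n_t}x_{n_t}=0$, and I would evaluate $\sum_{n_t}x_{n_t}^2=d_t^2\sum_{n_t}(n_t-\tfrac{N_t+1}{2})^2=N_t(N_t^2-1)d_t^2/12$ via the standard closed form. This gives $g_{\mathrm{Tx}}\approx(N_t/r_{\mathrm{Tx}}^2)(1+\Delta_{\mathrm{Tx}})$ with precisely the stated $\Delta_{\mathrm{Tx}}$, and the analogous $g_{\mathrm{Rx}}$; inserting both into \eqref{CRB_RCS} yields \eqref{CRB_RCS.1.2}.

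The main obstacle is organizing the second-order expansion so that the Fresnel correction emerges cleanly: one must retain the $x_{n_t}^2$ contribution arising from squaring the linear term inside the geometric series, not merely the explicit quadratic term already present in $r_{n_t}^2$, since it is their combination that fixes the $(4\sin^2\theta_{\mathrm{Tx},q}-1)$ signature. I would also make explicit the regime of validity, namely that the array aperture is small relative to the range (the Fresnel condition bounding $D/r_{\mathrm{Tx}}$), which justifies truncating the series and discarding the $O((x_{n_t}/r_{\mathrm{Tx}})^3)$ remainder, so that the approximations asserted in \eqref{CRB_RCS.1} are quantitatively controlled rather than merely heuristic.
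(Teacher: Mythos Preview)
Your proposal is correct and follows essentially the same strategy as the paper: reduce the bound to the gain sums $g_{\mathrm{Tx}},g_{\mathrm{Rx}}$, handle the far-field case by collapsing all ranges to the reference range, and in the near-field case perform a second-order expansion in $x_{n_t}/r_{\mathrm{Tx}}$, exploit the centering symmetry $\sum_{n_t}x_{n_t}=0$, and evaluate $\sum_{n_t}x_{n_t}^2=N_t(N_t^2-1)d_t^2/12$. The only cosmetic difference is that the paper first expands $r_{n_t}=r_{\mathrm{Tx}}(1+\delta_{n_t})$ via $\sqrt{1+\varepsilon}$ and then applies $(1+\delta)^{-2}\approx 1-2\delta+3\delta^2$, whereas you work directly with $r_{n_t}^2$ and expand $(1-u)^{-1}$; both routes land on the same per-element expression $1+2\sin\theta\,x_{n_t}/r_{\mathrm{Tx}}+(4\sin^2\theta-1)x_{n_t}^2/r_{\mathrm{Tx}}^2$, and your version is marginally more economical since it bypasses the square root altogether.
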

\begin{proof}
  See Appendix A. 
\end{proof}

Fig.~\ref{range} and Fig~\ref{angle} display the far-field (first-order Taylor) and near-field (second-order Taylor) approximation error to the true $g_{\mathrm {Tx}}$ or $g_{\mathrm {Rx}}$ in \eqref{gain_Tx_Rx}. Fig.~\ref{range} shows that the error decreases with target range. Meanwhile, the NF approximation is consistently much more accurate, and far-field becomes reliable mainly beyond the Fraunhofer distance. Fig.~\ref{angle} shows that the error depends strongly on the angle and shows symmetric dips. Then, the larger range reduces error and near-field stays better, especially at short or medium ranges.
\subsubsection{Velocity Estimation}
As for the CRB of velocity estimation, we have the following Theorem.
\begin{theorem}
Then, for the far-field target, we omit the second-order items $\frac{(N_t^2-1)d_t^2}{12{r_{\mathrm{Tx}}^{\mathrm{FF}}}^2}$ and $\frac{(N_r^2-1)d_r^2}{12{r_{\mathrm{Rx}}^{\mathrm{FF}}}^2}$, and have
\begin{small}
\begin{subequations}
\label{CRB_vel_FF_final}
\begin{align}
\mathrm{CRB}_{v_x}^{\mathrm{FF}}
&\!\approx\!
\frac{32\pi^2\sigma_r^2(r_{\mathrm{Tx}}^{\mathrm{FF}}r_{\mathrm{Rx}}^{\mathrm{FF}})^2}
{|\alpha|^2\mathcal{P}N_tN_rT_{\mathrm{sym}}^2
C_M\big(\sin\theta_{\mathrm{Tx},q}\!+\!\sin\theta_{\mathrm{Rx},q}\big)^2\lambda_c^2}, \\
\mathrm{CRB}_{v_y}^{\mathrm{FF}} 
&\!\approx\!
\frac{32\pi^2\sigma_r^2(r_{\mathrm{Tx}}^{\mathrm{FF}}r_{\mathrm{Rx}}^{\mathrm{FF}})^2}
{|\alpha|^2\mathcal{P}N_tN_rT_{\mathrm{sym}}^2
C_M
\big(\cos\theta_{\mathrm{Tx},q}\!+\!\cos\theta_{\mathrm{Rx},q}\big)^2\lambda_c^2}.
\end{align}
\end{subequations}
\end{small}
\noindent\hspace{-0.8em} where $C_M=\frac{M(M+1)(2M+1)}{6}$.
Then, for the near-field target, we have 
\begin{small}
\begin{subequations}\label{CRB_vel_NF_final}
\begin{align}
\hspace*{-1em}\mathrm{CRB}_{v_x}^{\mathrm{NF}} 
\!\approx\!&
\frac{32\pi^2\sigma_r^2({r_{\mathrm{Tx}}^{\mathrm{NF}}}{r_{\mathrm{Rx}}^{\mathrm{NF}}})^2}
{|\alpha|^2\mathcal{P}N_tN_rT_{\mathrm{sym}}^2
C_M\big(\sin\theta_{\mathrm{Tx},q}\!+\!\sin\theta_{\mathrm{Rx},q}\big)^2
\Psi_{x}\lambda_c^2}, \\
\hspace*{-1em}\mathrm{CRB}_{v_y}^{\mathrm{NF}}
\!\approx\!&
\frac{32\pi^2\sigma_r^2({r_{\mathrm{Tx}}^{\mathrm{NF}}}{r_{\mathrm{Rx}}^{\mathrm{NF}}})^2}
{|\alpha|^2\mathcal{P}N_tN_rT_{\mathrm{sym}}^2
C_M
\big(\cos\theta_{\mathrm{Tx},q}\!+\!\cos\theta_{\mathrm{Rx},q}\big)^2
\Psi_{y}\lambda_c^2}, 
\end{align}
\end{subequations}
\end{small}
\noindent\hspace{-1em} where the near-field correction terms $\Psi_{x}$ and $\Psi_{y}$ are given by
\begin{subequations} \label{correct_term_v}
\begin{align}
\Psi_x  \!=\!\frac{\Phi_x}{\left(\sin\theta_{\mathrm{Tx},q}+\sin\theta_{\mathrm{Rx},q}\right)^2}, \label{correct_term_vx}\\
\Psi_y  \!=\!\frac{\Phi_y}{\left(\cos\theta_{\mathrm{Tx},q}+\cos\theta_{\mathrm{Rx},q}\right)^2}, \label{correct_term_vy}
\end{align}
\end{subequations}
where the numerators $\Phi_x$  and $\Phi_y$ are given by 
\begin{small}
\begin{subequations} \label{correct_sub_term_v}
\begin{align}
\Phi_x &\!\triangleq\! A_{\mathrm{Tx}} B_{\mathrm{Rx}}^{(x)} \!+\! A_{\mathrm{Rx}} B_{\mathrm{Tx}}^{(x)}
\!+\! 2\sin\theta_{\mathrm{Tx},q}\sin\theta_{\mathrm{Rx},q}
\left(1\!+\!\Delta_{x,\mathrm{Tx}}^{\mathrm{NF}}\!+\!\Delta_{x,\mathrm{Rx}}^{\mathrm{NF}}\right), \label{correct_sub_term_vx}\\
\Phi_y &\!\triangleq\! A_{\mathrm{Tx}} B_{\mathrm{Rx}}^{(y)} \!+\! A_{\mathrm{Rx}} B_{\mathrm{Tx}}^{(y)}
\!+\! 2\cos\theta_{\mathrm{Tx},q}\cos\theta_{\mathrm{Rx},q}
\left(1\!+\!\Delta_{y,\mathrm{Tx}}^{\mathrm{NF}}\!+\!\Delta_{y,\mathrm{Rx}}^{\mathrm{NF}}\right), \label{correct_sub_term_vy}
\end{align}
\end{subequations}
\end{small}
in which the variables $A_{\mathrm{Tx}}$ and  $A_{\mathrm{Rx}}$ are given by
\begin{subequations}
\begin{align}
A_{\mathrm{Tx}} = 1 + \frac{\left(N_t^{2}-1\right)d_t^{2}}{12r_{\mathrm{Tx}}^{2}}
\left(4\sin^{2}\theta_{\mathrm{Tx},q}-1\right),\\
A_{\mathrm{Rx}} = 1 + \frac{\left(N_r^{2}-1\right)d_r^{2}}{12r_{\mathrm{Rx}}^{2}}
\left(4\sin^{2}\theta_{\mathrm{Rx},q}-1\right),
\end{align}
\end{subequations}
the variables related to $B_{\mathrm{Tx}}$ and $B_{\mathrm{Rx}}$ are given by
\begin{small}
\begin{subequations}
\begin{align}
B_{\mathrm{Tx}}^{(x)} 
&\!=\! \sin^{2}\theta_{\mathrm{Tx},q}
\!+\! \frac{\left(N_t^{2}\!-\!1\right)d_t^{2}}{12r_{\mathrm{Tx}}^{2}}
\left(12\sin^{4}\theta_{\mathrm{Tx},q}\!-\!10\sin^{2}\theta_{\mathrm{Tx},q}\!+\!1\right), \\
B_{\mathrm{Rx}}^{(x)} 
&\!=\! \sin^{2}\theta_{\mathrm{Rx},q}
\!+\! \frac{\left(N_r^{2}\!-\!1\right)d_r^{2}}{12r_{\mathrm{Rx}}^{2}}
\left(12\sin^{4}\theta_{\mathrm{Rx},q}\!-\!10\sin^{2}\theta_{\mathrm{Rx},q}\!+\!1\right), \\
B_{\mathrm{Tx}}^{(y)} 
&\!=\! \cos^{2}\theta_{\mathrm{Tx},q}
\!+\! \frac{\left(N_t^{2}\!-\!1\right)d_t^{2}}{12r_{\mathrm{Tx}}^{2}}
\cos^{2}\theta_{\mathrm{Tx},q}\left(12\sin^{2}\theta_{\mathrm{Tx},q}\!-\!2\right), \\
B_{\mathrm{Rx}}^{(y)} 
&\!=\! \cos^{2}\theta_{\mathrm{Rx},q}
\!+\! \frac{\left(N_r^{2}\!-\!1\right)d_r^{2}}{12r_{\mathrm{Rx}}^{2}}
\cos^{2}\theta_{\mathrm{Rx},q}\left(12\sin^{2}\theta_{\mathrm{Rx},q}\!-\!2\right),
\end{align}
\end{subequations}
\end{small}
and the variables related to the $\Delta^{\mathrm{NF}}$ are listed as 
\begin{subequations}
\begin{align}
 \Delta_{x,\mathrm{Tx}}^{\mathrm{NF}} \triangleq \frac{\left(N_t^{2}-1\right)d_t^{2}}{8r_{\mathrm{Tx}}^{2}}
\left(-3+5\sin^{2}\theta_{\mathrm{Tx},q}\right),\\
\Delta_{x,\mathrm{Rx}}^{\mathrm{NF}} \triangleq \frac{\left(N_r^{2}-1\right)d_r^{2}}{8r_{\mathrm{Rx}}^{2}}
\left(-3+5\sin^{2}\theta_{\mathrm{Rx},q}\right),  \\
\Delta_{y,\mathrm{Tx}}^{\mathrm{NF}} \triangleq \frac{\left(N_t^{2}-1\right)d_t^{2}}{8r_{\mathrm{Tx}}^{2}}
\left(-1+5\sin^{2}\theta_{\mathrm{Tx},q}\right),\\
\Delta_{y,\mathrm{Rx}}^{\mathrm{NF}} \triangleq \frac{\left(N_r^{2}-1\right)d_r^{2}}{8r_{\mathrm{Rx}}^{2}}
\left(-1+5\sin^{2}\theta_{\mathrm{Rx},q}\right).
\end{align}
\end{subequations}
\end{theorem}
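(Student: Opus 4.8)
The plan is to start from the exact conditional velocity bounds in \eqref{CRB_vel.q1_single}--\eqref{CRB_vel.q2_single} and reduce their denominators to weighted sums over the array elements, after which both the far-field and near-field claims follow from a zeroth- and second-order Fresnel expansion, respectively. First I would specialize the geometry as in the statement: co-located ULAs on the $x$-axis centered at the origin, so the $n$-th element sits at $(\delta_n,0)$ with $\delta_n=(n-\tfrac{N+1}{2})d$, and the target at $(x_q,y_q)$ with reference range $r=\sqrt{x_q^2+y_q^2}$, $\sin\theta=x_q/r$, $\cos\theta=y_q/r$. Substituting the velocity steering-vector derivatives (the $\acute{\mathbf a}$, $\grave{\mathbf a}$ elements multiply the base element by $j\tfrac{2\pi}{\lambda_c}mT_{\mathrm{sym}}(x_q-x_\ast)/r_\ast$ and by $j\tfrac{2\pi}{\lambda_c}mT_{\mathrm{sym}}(y_q-y_\ast)/r_\ast$), together with the pathloss $g_\ast=\lambda_c/(4\pi r_\ast)$ from \eqref{eq:NB_pathloss}, turns each of the three constituent terms $\|\acute{\mathbf a}_{\mathrm R}\|^2\|{\mathbf a}_{\mathrm T}\|^2$, $2\Re\{(\acute{\mathbf a}_{\mathrm R}^H{\mathbf a}_{\mathrm R})({\mathbf a}_{\mathrm T}^H\acute{\mathbf a}_{\mathrm T})\}$, $\|{\mathbf a}_{\mathrm R}\|^2\|\acute{\mathbf a}_{\mathrm T}\|^2$ into products of scalar sums of the form $\sum_n (x_q-\delta_n)^k r_n^{-(2+\ell)}$ and $\sum_n r_n^{-2}$, carrying a common prefactor $(\tfrac{2\pi}{\lambda_c}mT_{\mathrm{sym}})^2$. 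The two factors $\acute{\mathbf a}_{\mathrm R}^H{\mathbf a}_{\mathrm R}$ and ${\mathbf a}_{\mathrm T}^H\acute{\mathbf a}_{\mathrm T}$ are each purely imaginary, so their product is real and the $2\Re\{\cdot\}$ cross term reduces to twice the product of two radial-weighted sums.

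For the far-field limit I would retain only the leading order: $r_n\to r$, $(x_q-\delta_n)/r_n\to\sin\theta$, and $g_\ast^2\to\lambda_c^2/(16\pi^2r^2)$, so each scalar sum collapses to $N$ times its central-element value. The three terms then become proportional to $\sin^2\theta_{\mathrm{Rx}}$, $2\sin\theta_{\mathrm{Rx}}\sin\theta_{\mathrm{Tx}}$, and $\sin^2\theta_{\mathrm{Tx}}$, which combine into the perfect square $(\sin\theta_{\mathrm{Tx}}+\sin\theta_{\mathrm{Rx}})^2$; this is exactly the bistatic Doppler-sensitivity factor for $v_x$. Performing the slow-time sum via $\sum_{m=1}^M m^2=C_M=\tfrac{M(M+1)(2M+1)}{6}$ and collecting the $\lambda_c$ and $\pi$ constants yields \eqref{CRB_vel_FF_final}; the $v_y$ case is identical with $(y_q-\delta_n)/r_n\to\cos\theta$ giving $(\cos\theta_{\mathrm{Tx}}+\cos\theta_{\mathrm{Rx}})^2$.

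For the near-field bounds I would keep the second-order Fresnel expansion $r_n\approx r-\delta_n\sin\theta+\tfrac{\delta_n^2\cos^2\theta}{2r}$ and expand each scalar sum to $O((\delta_n/r)^2)$. Because the array is centered, the odd moment $\sum_n\delta_n$ vanishes and the only surviving correction is carried by $\sum_n\delta_n^2=\tfrac{N(N^2-1)d^2}{12}$; dividing by the leading $N$ produces the ubiquitous factor $\tfrac{(N^2-1)d^2}{12r^2}$ multiplied by an angle-dependent polynomial. Expanding $\sum_n r_n^{-2}$, $\sum_n (x_q-\delta_n)r_n^{-3}$, and $\sum_n (x_q-\delta_n)^2 r_n^{-4}$ to this order yields, respectively, the aperture factor $A_{\mathrm{Tx/Rx}}$, the quadratic numerators $B^{(x)}_{\mathrm{Tx/Rx}}$, and the $\Delta^{\mathrm{NF}}$ corrections stated in the theorem; assembling the three constituent terms and normalizing by the far-field square then gives $\Phi_x$, $\Psi_x$ and hence \eqref{CRB_vel_NF_final}. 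The $v_y$ case differs only because $y_q-\delta_n$ is independent of $n$ (all elements lie on $y=0$), so its quadratic numerator $B^{(y)}$ and correction $\Delta_y^{\mathrm{NF}}$ carry the distinct $\cos^2\theta(12\sin^2\theta-2)$ and $(-1+5\sin^2\theta)$ polynomials.

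The hard part is the second-order bookkeeping of the cross term. Unlike the two outer terms, which each involve a single array, the product $(\acute{\mathbf a}_{\mathrm R}^H{\mathbf a}_{\mathrm R})({\mathbf a}_{\mathrm T}^H\acute{\mathbf a}_{\mathrm T})$ mixes a first-order Tx correction with a first-order Rx correction and with the second-order single-side corrections, so its expansion must be reorganized into the symmetric $1+\Delta^{\mathrm{NF}}_{x,\mathrm{Tx}}+\Delta^{\mathrm{NF}}_{x,\mathrm{Rx}}$ structure rather than a simple sum of $A$ and $B$ factors. Care is also needed to retain the expansions of the numerator $(x_q-\delta_n)^k$ and of the pathloss $r_n^{-2}$ to consistent order and to discard $O((\delta_n/r)^3)$ terms, which is justified within the radiative Fresnel region and under assumption \textbf{A4}. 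Once these polynomials are collected and the common $\tfrac{(N^2-1)d^2}{12r^2}$ factors are pulled out, the stated compact forms for $\Phi_x,\Phi_y,\Psi_x,\Psi_y$ follow.
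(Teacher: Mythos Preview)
Your proposal is correct and follows essentially the same route as the paper's Appendix~B: decompose the denominator of \eqref{CRB_vel.q1_single}--\eqref{CRB_vel.q2_single} into the two self-terms and the cross term, express each as a weighted element sum, then apply zeroth-order (far-field) and second-order Fresnel (near-field) expansions using $\sum_n\delta_n=0$ and $\sum_n\delta_n^2=\tfrac{N(N^2-1)d^2}{12}$ to obtain the $A$, $B$, and $\Delta^{\mathrm{NF}}$ factors. Your observation that the cross term is real because $\acute{\mathbf a}^H\mathbf a$ is purely imaginary on each side, and your remark that $y_q-\delta_n$ is $n$-independent (hence the different polynomial structure in $B^{(y)}$ and $\Delta_y^{\mathrm{NF}}$), are both correct and slightly more explicit than the paper's presentation, but the underlying strategy is identical.
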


\begin{proof}
    See appendix B.
\end{proof}

Note that the correction terms $\Psi_{x}$ and $\Psi_{y}$ will be closed to 1 when the range of the target is larger than Fraunhofer distance.
\subsubsection{Location Estimation} Finnaly, for the CRB of location estimation, we have the following Theorem.
\begin{theorem}
For the far-field target, we have
\begin{small}
\begin{subequations} \label{FF_approx_loc}
\begin{align}
\dot{G}^{\mathrm{FF}}_{\mathrm{Tx}} 
&\approx \frac{N_t}{4 (r^{\mathrm{FF}}_{\mathrm{Tx}})^{2}} \sin^{2}\theta_{\mathrm{Tx},q},
\dot{G}^{\mathrm{FF}}_{\mathrm{Rx}} 
&\approx \frac{N_r}{4 (r^{\mathrm{FF}}_{\mathrm{Rx}})^{2}} \sin^{2}\theta_{\mathrm{Rx},q}, \\
\ddot{G}^{\mathrm{FF}}_{\mathrm{Tx}}
&\approx \frac{N_t}{4 (r^{\mathrm{FF}}_{\mathrm{Tx}})^{2}} \cos^{2}\theta_{\mathrm{Tx},q},
\ddot{G}^{\mathrm{FF}}_{\mathrm{Rx}}
&\approx \frac{N_r}{4 (r^{\mathrm{FF}}_{\mathrm{Rx}})^{2}} \cos^{2}\theta_{\mathrm{Rx},q}.
\end{align}
\end{subequations}
\end{small}
\noindent\hspace{-0.5em}Substituting \eqref{FF_approx_loc} into \eqref{CRB_loc.q1_single} and  \eqref{CRB_loc.q2_single}, we have the far-field wide-band CRB for localization as  
\begin{small}
\begin{subequations}
\label{CRB_loc_FF_NB}
\begin{align}
\mathrm{CRB}_{x}^{\mathrm{FF}}
&\approx
\frac{32\pi^2\sigma_r^2
\big(r_{\mathrm{Tx}}^{\mathrm{FF}}r_{\mathrm{Rx}}^{\mathrm{FF}}\big)^2}
{|\alpha|^2\mathcal PMN_tN_r
\Big(\sin\theta_{\mathrm{Tx},q}+\sin\theta_{\mathrm{Rx},q}\Big)^2
\lambda_c^2},\\
\mathrm{CRB}_{y}^{\mathrm{FF}}
&\approx
\frac{32\pi^2\sigma_r^2
\big(r_{\mathrm{Tx}}^{\mathrm{FF}}r_{\mathrm{Rx}}^{\mathrm{FF}}\big)^2}
{|\alpha|^2\mathcal PMN_tN_r
\Big(\cos\theta_{\mathrm{Tx},q}+\cos\theta_{\mathrm{Rx},q}\Big)^2
\lambda_c^2}.
\end{align}
\end{subequations}
\end{small}
Similarly, for the near-field target, we have 
\begin{small}
\begin{subequations}
\label{CRB_loc_NF_NB}
\begin{align}
\mathrm{CRB}_{x}^{\mathrm{NF}}
&\approx
\frac{32\pi^2\sigma_r^2
\big(r_{\mathrm{Tx}}^{\mathrm{NF}}r_{\mathrm{Rx}}^{\mathrm{NF}}\big)^2}
{|\alpha|^2\mathcal PMN_tN_r
\Big(\sin\theta_{\mathrm{Tx},q}+\sin\theta_{\mathrm{Rx},q}\Big)^2
\Psi_x\lambda_c^2},\\
\mathrm{CRB}_{y}^{\mathrm{NF}}
&\approx
\frac{32\pi^2\sigma_r^2
\big(r_{\mathrm{Tx}}^{\mathrm{NF}}r_{\mathrm{Rx}}^{\mathrm{NF}}\big)^2}
{|\alpha|^2\mathcal PMN_tN_r
\Big(\cos\theta_{\mathrm{Tx},q}+\cos\theta_{\mathrm{Rx},q}\Big)^2
\Psi_y\lambda_c^2}.
\end{align}
\end{subequations}
\end{small} 
\end{theorem}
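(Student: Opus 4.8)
The plan is to reduce the exact single-target location bounds \eqref{CRB_loc.q1_single}--\eqref{CRB_loc.q2_single} to scalar angular gains in the same way the velocity bounds were handled in Theorem~2, and then to exploit a structural coincidence between the location and velocity derivatives. Writing $r_\ast\triangleq\|\mathbb{I}_q-\mathbb{I}_\ast\|$ for $\ast\in\{n_t,n_r\}$, inspection of \eqref{derivation_x} shows that the dominant contribution to $\dot g_{\mathrm{R,T},\ast}$ is the purely imaginary phase-gradient term $-j\tfrac{2\pi}{\lambda_c}\tfrac{x_q-x_\ast}{r_\ast}$, and to $\ddot g_{\mathrm{R,T},\ast}$ the term $-j\tfrac{2\pi}{\lambda_c}\tfrac{y_q-y_\ast}{r_\ast}$; these are exactly the velocity-derivative leading terms of Theorem~2 stripped of their slow-time weight $mT_{\mathrm{sym}}$. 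The remaining pieces of \eqref{derivation_x}, namely the real amplitude term $-(x_q-x_\ast)/r_\ast^2$ and the $O(mT_{\mathrm{sym}}/r_\ast^3)$ motion term, are smaller than the leading term by factors $O(\lambda_c/r_\ast)$ and $O(\|\bm v^{(q)}\|mT_{\mathrm{sym}}/r_\ast)$ and are discarded under the operating regime and assumption \textbf{A4}.

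For the far-field gains \eqref{FF_approx_loc}, I would freeze the element angle at its reference value across the aperture and set $r_{n_t}\approx r_{\mathrm{Tx}}^{\mathrm{FF}}$, so that $\dot G_{\mathrm{Tx}}^{\mathrm{FF}}=\|\dot{\bf a}_{\mathrm T}\|_2^2=\sum_{n_t}|\dot g_{\mathrm T,n_t}|^2 g_{n_t}^2\approx(2\pi/\lambda_c)^2\sin^2\theta_{\mathrm{Tx},q}\cdot(\lambda_c^2/16\pi^2)(N_t/(r_{\mathrm{Tx}}^{\mathrm{FF}})^2)$, where the carrier-wavelength factors cancel to leave $N_t\sin^2\theta_{\mathrm{Tx},q}/(4(r_{\mathrm{Tx}}^{\mathrm{FF}})^2)$; the $y$-gains and the Rx gains follow by $\sin\to\cos$ and $\mathrm{Tx}\to\mathrm{Rx}$. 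Substituting into the bracketed denominator of \eqref{CRB_loc.q1_single}, the decisive step is the cross term: since the leading $\dot g$ is purely imaginary, $\dot{\bf a}_{\mathrm R}^H{\bf a}_{\mathrm R}$ and ${\bf a}_{\mathrm T}^H\dot{\bf a}_{\mathrm T}$ acquire opposite $\pm j$ phases, their product is real and positive, and the three terms collapse into the perfect square $\propto(\sin\theta_{\mathrm{Tx},q}+\sin\theta_{\mathrm{Rx},q})^2$ (the cosine square for $y$). Because the retained term carries no $m$-dependence, the slow-time sum contributes only a factor $M$, in contrast with $T_{\mathrm{sym}}^2C_M$ for velocity; rearranging with the isotropic-power identity \eqref{iso_trans} then yields \eqref{CRB_loc_FF_NB}.

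For the near-field bounds \eqref{CRB_loc_NF_NB}, I would reuse the second-order Fresnel expansion of $r_{n_t}$ and of the angular factor $(x_q-x_{n_t})/r_{n_t}$ along the aperture already developed for Theorem~2 in Appendix~B. Since the location-gradient and velocity-gradient leading terms share an identical, time-independent geometric structure, their aperture-averaged corrections are governed by the same quantities $A_{\mathrm{Tx/Rx}}$, $B_{\mathrm{Tx/Rx}}^{(x,y)}$ and $\Delta^{\mathrm{NF}}$, and therefore assemble into the same factors $\Psi_x,\Psi_y$ of \eqref{correct_term_v}. The only change from velocity is, once more, the slow-time weight: the location leading term lacks $mT_{\mathrm{sym}}$, so the snapshot sum gives $M$ rather than $T_{\mathrm{sym}}^2C_M$. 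Replacing $r^{\mathrm{FF}}\to r^{\mathrm{NF}}$ and inserting $\Psi_x$ (resp.\ $\Psi_y$) into the perfect-square denominator produces \eqref{CRB_loc_NF_NB}.

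I expect the principal obstacle to be bookkeeping in the near-field step rather than any conceptual difficulty: one must verify that the amplitude and $O(1/r_\ast^3)$ terms of \eqref{derivation_x}, which have no velocity counterpart, do not inject a fresh $O(N_t^2 d_t^2/r_{\mathrm{Tx}}^2)$ correction capable of altering $\Psi_x,\Psi_y$. The self-term contribution of the amplitude part scales as $(x_q-x_\ast)^2/r_\ast^4$, which is smaller than the genuine near-field correction $\tfrac{4\pi^2}{\lambda_c^2}\sin^2\theta_{\mathrm{Tx},q}\,N_t^2d_t^2/r_{\mathrm{Tx}}^2$ by the aperture factor $O(\lambda_c^2/(N_t^2d_t^2))\ll1$; confirming this suppression (and the analogous one for the motion term) is what legitimizes importing the velocity correction factors verbatim, and is the step I would check most carefully.
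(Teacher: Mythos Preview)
Your proposal is correct and follows essentially the same route as the paper: the paper's proof simply states that the argument is analogous to Appendix~B (the velocity case) with the sole change that the location derivatives lack the $mT_{\mathrm{sym}}$ weight, so the slow-time sum collapses to $M$ instead of $T_{\mathrm{sym}}^{2}C_M$, and the far-/near-field approximations carry over verbatim. Your treatment is in fact more careful than the paper's, since you explicitly justify discarding the amplitude and $O(1/r_\ast^{3})$ motion terms in \eqref{derivation_x} that have no velocity counterpart---the paper does not address this point at all.
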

\begin{proof}
The proof is analogous to that of the velocity CRB in Appendix B and hence is omitted for brevity.
The only difference is that the steering-vector derivatives with respect to the location parameters $(x_q,y_q)$ do not introduce the slow-time factor $mT_{\mathrm{sym}}$ appearing in the velocity derivatives.
Consequently, the slow-time accumulation reduces to
$\sum_{m=1}^{M}1=M$ (instead of $\sum_{m=1}^{M}m^{2}T_{\mathrm{sym}}^{2}=T_{\mathrm{sym}}^{2}C_M$),
while the remaining frequency-domain summation and the subsequent far-field and near-field approximations proceed identically.
\end{proof}

\begin{remark}\textbf{Multi-target Case:}
For $Q>1$ targets, the FIM becomes block-structured, where the diagonal blocks correspond to the single-target information and the off-diagonal blocks capture inter-target coupling due to non-orthogonal steering responses. The CRB of target $q$ is given by the $q$-th diagonal block of the inverse FIM, i.e., $\mathrm{CRB}(\boldsymbol{\theta}_q)=\big[\mathbf{J}^{-1}(\boldsymbol{\theta})\big]_{qq}$. When targets are well separated (or approximately orthogonal), the coupling blocks are negligible and the multi-target CRBs is the summation of each single-target bound. 
\end{remark}

\section{Numerical Experiments}
Throughout the simulations, the system parameters are configured according to the potential 6G frequency bands identified by the ITU-R and recent studies on upper-mid (FR3) spectrum allocation \cite{testolina2024sharing,emil2025enabling}.
Specifically, we set carrier frequency $f_c=15$~GHz (wavelength $\lambda_c=c/f_c$ and $c=3\times10^8$~m/s) . The number of snapshot is set as 
$M=256$. Both Tx and Rx employ centered ULAs along the $x$-axis with $N_t=N_r=256$ elements and half-wavelength spacing $d_t=d_r=\lambda_c/2$, yielding apertures $D_{\mathrm{Tx}}=(N_t-1)d_t$ and $D_{\mathrm{Rx}}=(N_r-1)d_r$. The reactive and radiative near-field boundaries are computed via the standard expressions $0.62\sqrt{D^3/\lambda_c}$ and $2D^2/\lambda_c$ for both arrays \cite{elbir2023near-field,elbir2024spherical,elbir2023nba,elbir2024near,hu2024joint}.
Additive noise is modeled as spatially and temporally white circular Gaussian with variance ${\sigma}_r^2 = -114$ dBm. Transmit power is fixed to $\mathcal{P}=0.1$ W. For the single target case, we set the RCS and the velocity as ${\alpha}=1+0.1j$ and  ${\bf v}=[1,4]$ m/s, respectively. The relative error is calculated by $\epsilon_{\mathrm{rel}} =\left|\left(\mathrm{CRB}_{\mathrm{approx.}}-\mathrm{CRB}_{\mathrm{true}}\right)/\mathrm{CRB}_{\mathrm{true}}\right|$. Throughout the simulations, the true CRB is directly obtained from \eqref{CRB_RCS_NB}, \eqref{CRB_vel_q_NB} and \eqref{CRB_loc_q_NB}, while its far-field and near-field approximations are obtained by the corresponding expressions in \eqref{CRB_RCS.1}, \eqref{CRB_vel_FF_final}, \eqref{CRB_vel_NF_final}, \eqref{CRB_loc_FF_NB}, and \eqref{CRB_loc_NF_NB}, respectively.
\vspace{-0.2cm}
\subsection{CRB for RCS with Single Target and Monostatic Setting}
We first consider a single-target monostatic setup, where the Tx and Rx arrays are co-located. In this case, the bistatic range reduces to twice the target distance and the measurement model becomes symmetric with respect to the array center i.e., $[0,0]$. 
Fig.~\ref{RCS_range} illustrates the CRB for RCS estimation versus the target range (with $N_t=256$, $\theta=20^{\circ}$). As expected, the CRB increases monotonically with range. This trend is primarily driven by the two-way propagation attenuation: in the monostatic near-field model, the effective channel amplitude contains the product of Tx/Rx distance-dependent gains, so the Fisher information decreases rapidly as the target moves farther away, leading to a steadily looser bound on the RCS estimate.

Fig.~\ref{RCS_range} also compares the proposed near-field (NF) approximation and the far-field (FF) approximation against the true CRB. The NF approximation is essentially closer to the true CRB across the entire range sweep, and its relative error remains consistently negligible (right axis). In contrast, the FF approximation exhibits a noticeable mismatch at short ranges where wavefront curvature is significant; however, its accuracy improves as the range increases and the propagation progressively approaches the plane-wave regime. These results confirm that the NF approximation is necessary to correctly predict RCS estimation limits in the radiative near-field, while far-field simplifications become reliable only at sufficiently large ranges. 
\begin{figure}[t]
\centering{\includegraphics[width=1\columnwidth]{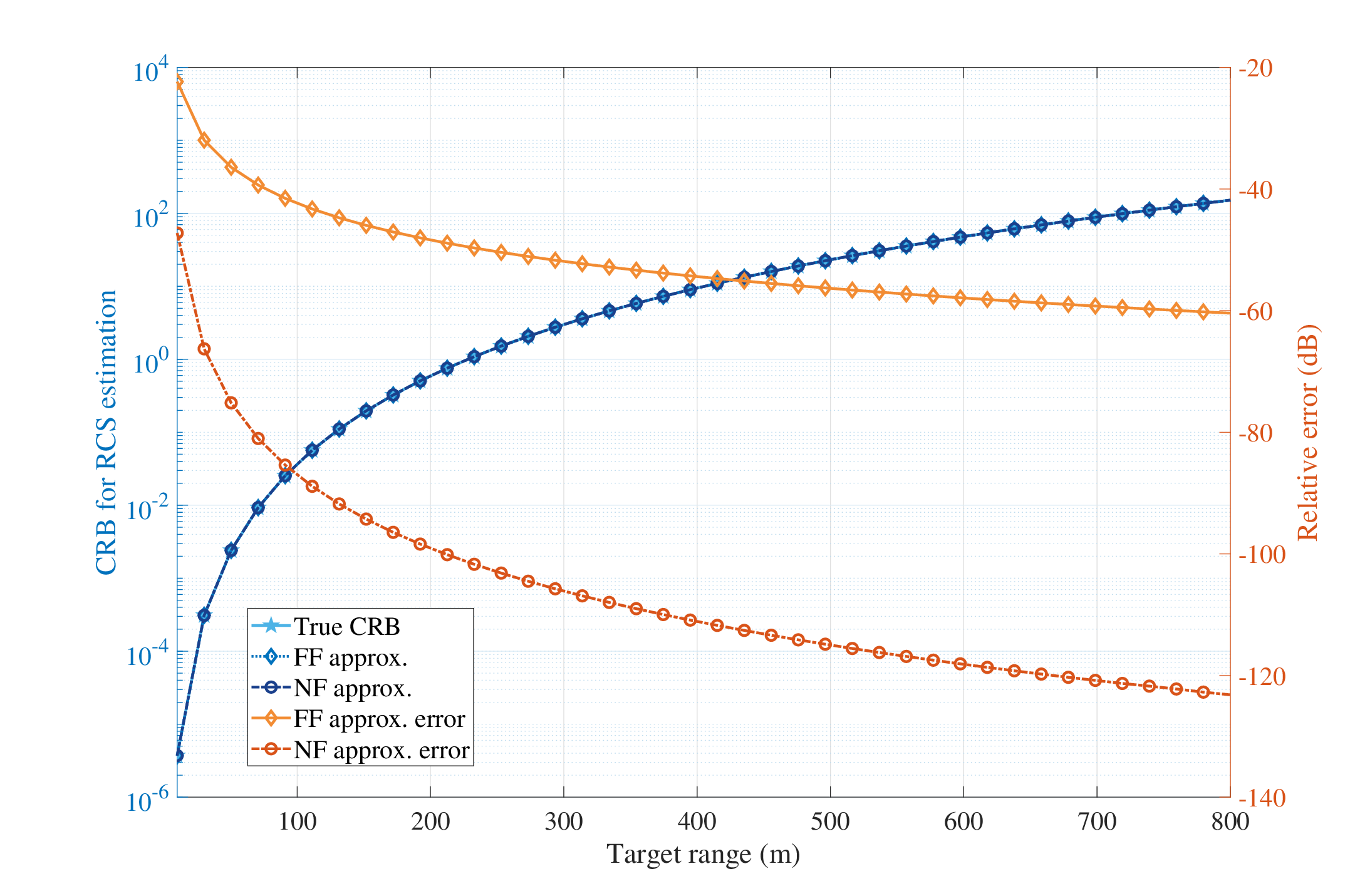}}
\vspace{-0.2cm}\caption{CRB for RCS versus target range, ($N_t=256$)
\label{RCS_range}}
\end{figure}

Fig.~\ref{RCS_angle} depicts the CRB for estimating the target RCS versus the target angle with $N_t=256$ and $r_{\mathrm{Tx},\mathrm{Rx}}=100~m$. The true CRB exhibits a clear angular dependence. Specifically, the estimation accuracy is best around the broadside and gradually degrades as the target moves toward the front of the array. This behavior is consistent with the aperture geometry of a centered ULA. Around broadside, the two-way phase profile across the array is the most ``balanced'' and provides stronger sensitivity to a global complex scaling (i.e., $\alpha_q$). Notably, the FF approximation yields an almost angle-invariant CRB, remaining essentially constant across the considered angles, since the plane-wave model neglects near-field curvature effects.
Fig.~\ref{RCS_angle} also compares the NF approximation and the FF approximation against the true CRB and reports the corresponding relative errors on the right axis. The NF approximation closely matches the true CRB across the full angular range, with a consistently small relative error. In contrast, the FF approximation introduces noticeable deviations, particularly in angular regions where near-field wavefront curvature and element-wise amplitude variations are non-negligible. The error curves further confirm that the FF approximation is only reliable when the target is sufficiently far and the curvature across the synthesized aperture becomes negligible, whereas the proposed NF approximation provides a uniformly accurate surrogate for narrow-band near-field performance evaluation.
\begin{figure}[t]
\centering{\includegraphics[width=1\columnwidth]{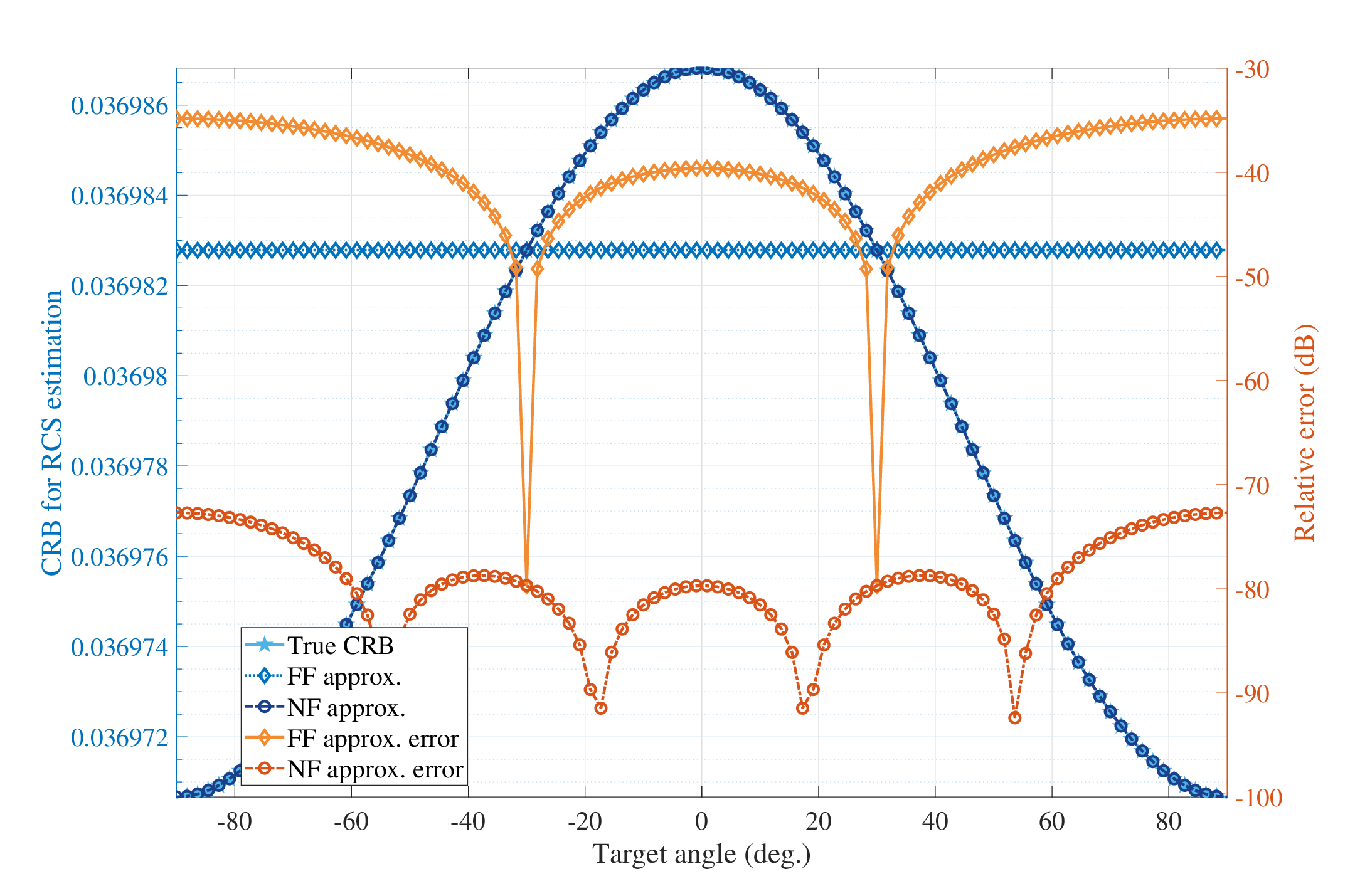}}
\caption{CRB for RCS versus target angle, ($N_t=256$)
\label{RCS_angle}}
\end{figure}

Fig.~5 shows the CRB for RCS estimation versus the number of antennas. It is seen that as $N_t$ increases, the true CRB decreases monotonically, confirming that a larger aperture provides higher array gain and more measurements, thereby increasing the Fisher information for the reflectivity parameter. Meanwhile,
the approximation accuracy is also reported via the relative error (right axis). Both NF approximation and FF approximation remain tightly aligned with the true CRB over the entire range of antenna number. 
\begin{figure}[t]
\centering{\includegraphics[width=1\columnwidth]{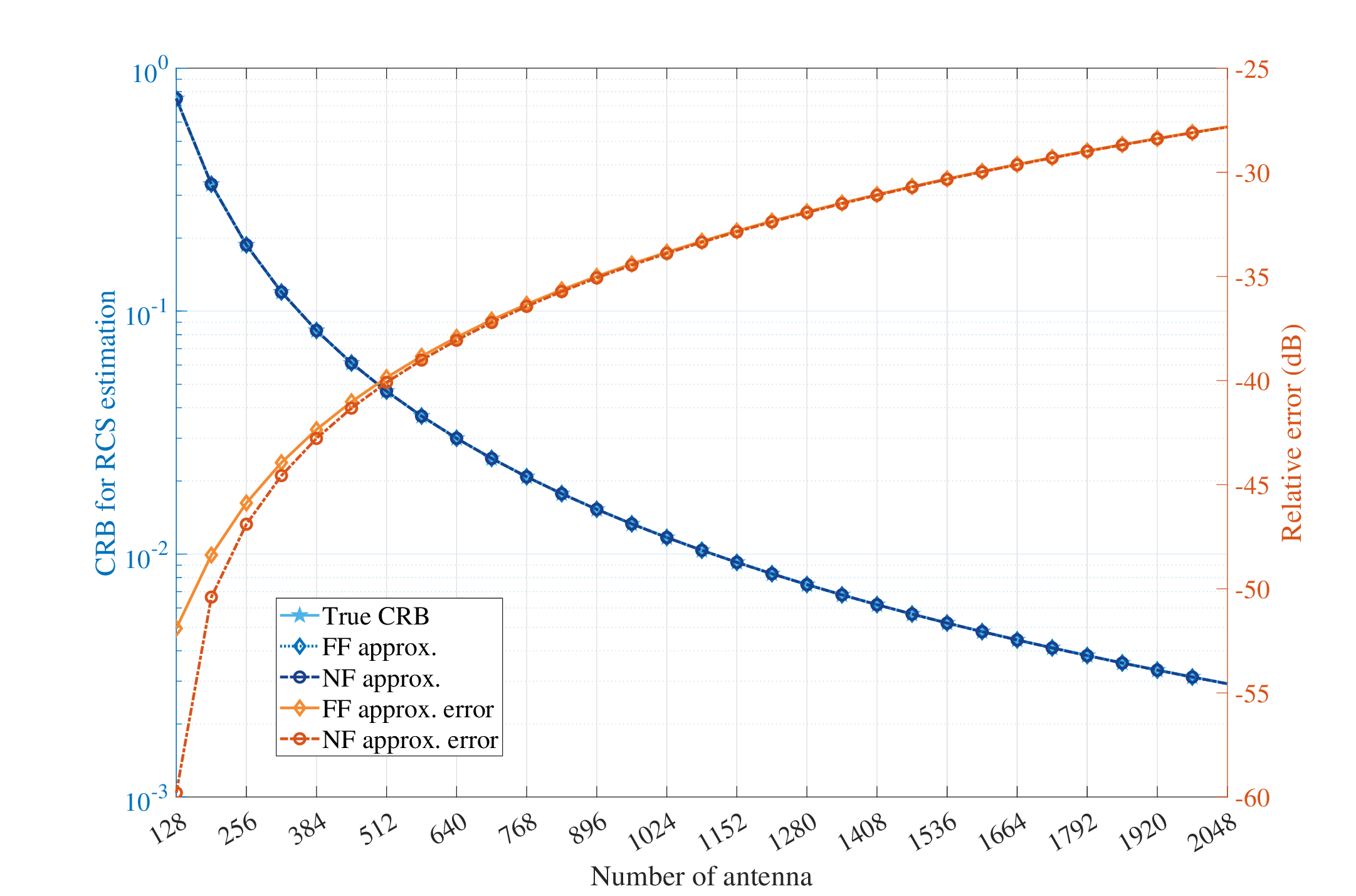}}
\caption{CRB for RCS versus antenna number
\label{RCS_antenna}}
\end{figure}
\vspace{-0.2cm}
\subsection{CRB for Velocity with Single Target and Monostatic Setting}
\vspace{-0.1cm}
We next investigate the velocity estimation limits in the single-target monostatic setting. It is seen from Fig.\ref{vel_range} that the true CRBs for both $v_x$ and $v_y$ increase monotonically with range, indicating that velocity estimation becomes progressively more challenging as the target moves away from the array. This behavior follows from the reduced received signal strength and the weakened Doppler sensitivity at larger distances, which jointly decrease the Fisher information associated with the slow-time phase evolution.
Fig.~\ref{vel_range} also compares the NF approximation and the FF approximation against the true CRB, with the relative errors reported on the right axis. The proposed NF approximation closely matches the true CRB for both velocity components over the entire range interval, yielding consistently small errors.  In contrast, the FF approximation exhibits noticeable discrepancies in the near-field regime, e.g. from short to moderate ranges, while its accuracy improves as the target range increases and the propagation approaches the plane-wave condition. Overall, these results confirm that near-field wavefront curvature and element-wise range dependence must be accounted for to correctly characterize velocity estimation limits with electrically large apertures, whereas far-field simplifications are reliable only at sufficiently large ranges

\begin{figure}[t]
\centering{\includegraphics[width=1\columnwidth]{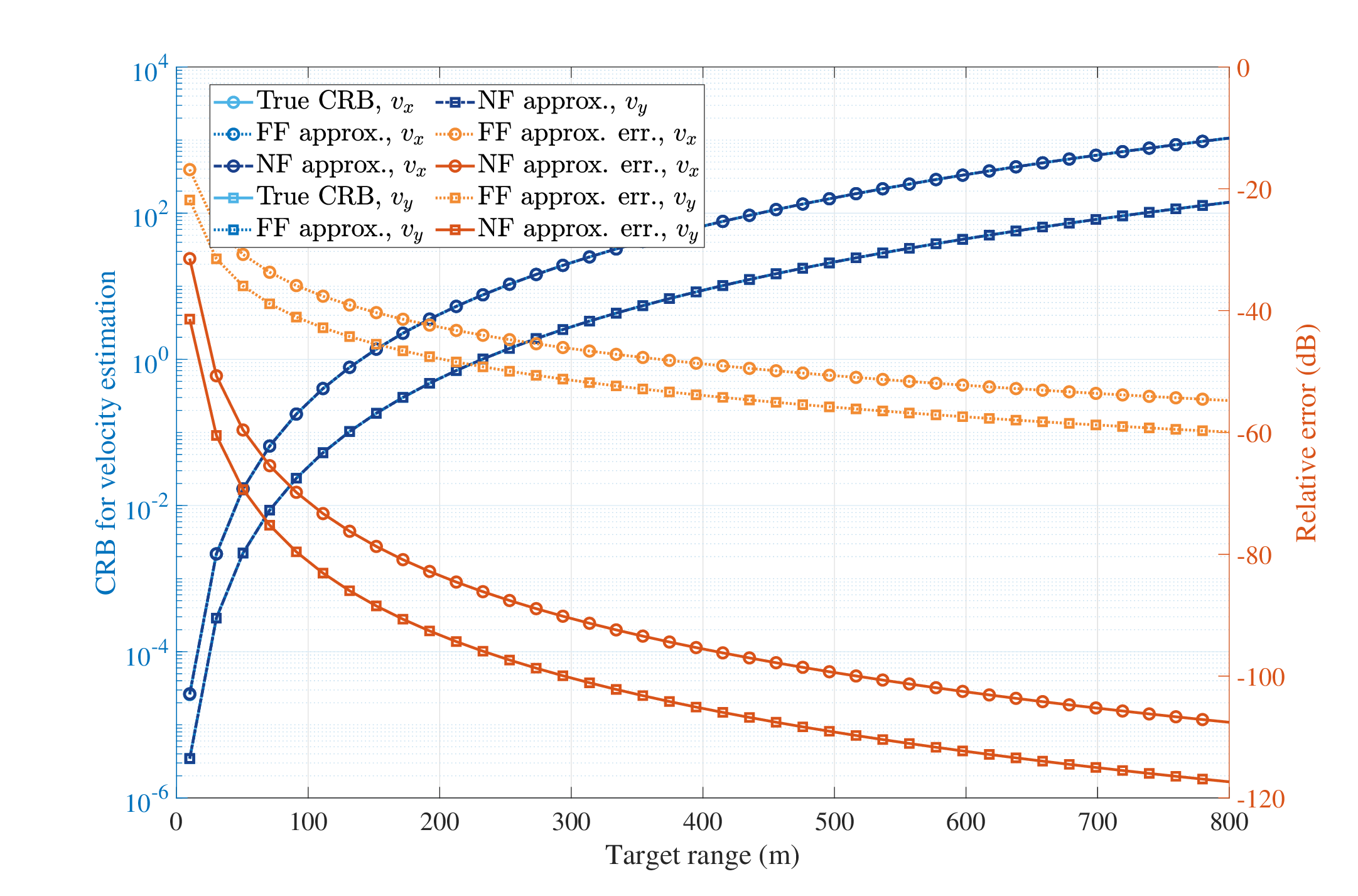}}
\caption{CRB for velocity versus target range, ($N_t=256$)
\label{vel_range}}
\end{figure}

Fig.~\ref{vel_angle} depicts the CRBs for estimating the velocity components versus the target angle with $N_t=256$. The true bounds exhibit a pronounced angular dependence and strong anisotropy between $v_x$ and $v_y$. In particular, the CRB for $v_x$ becomes sharply worse around broadside ($\theta\approx 0^\circ$), whereas the CRB for $v_y$ is comparatively smaller near broadside and degrades toward larger $|\theta|$. This trend is consistent with the underlying geometry. Specifically, the Doppler sensitivity of each Cartesian velocity component is governed by its projection onto the line-of-sight directions from the array elements to the target, which varies with $\theta$ and leads to angle-dependent observability. Furthermore, Fig.~\ref{vel_angle} also compares the NF and FF approximations with the true CRB and reports the relative errors on the right axis. The proposed NF approximation remains tightly aligned with the true CRB for both $v_x$ and $v_y$ over the entire angular range, with consistently small relative errors. 

\begin{figure}[t]
\centering{\includegraphics[width=1\columnwidth]{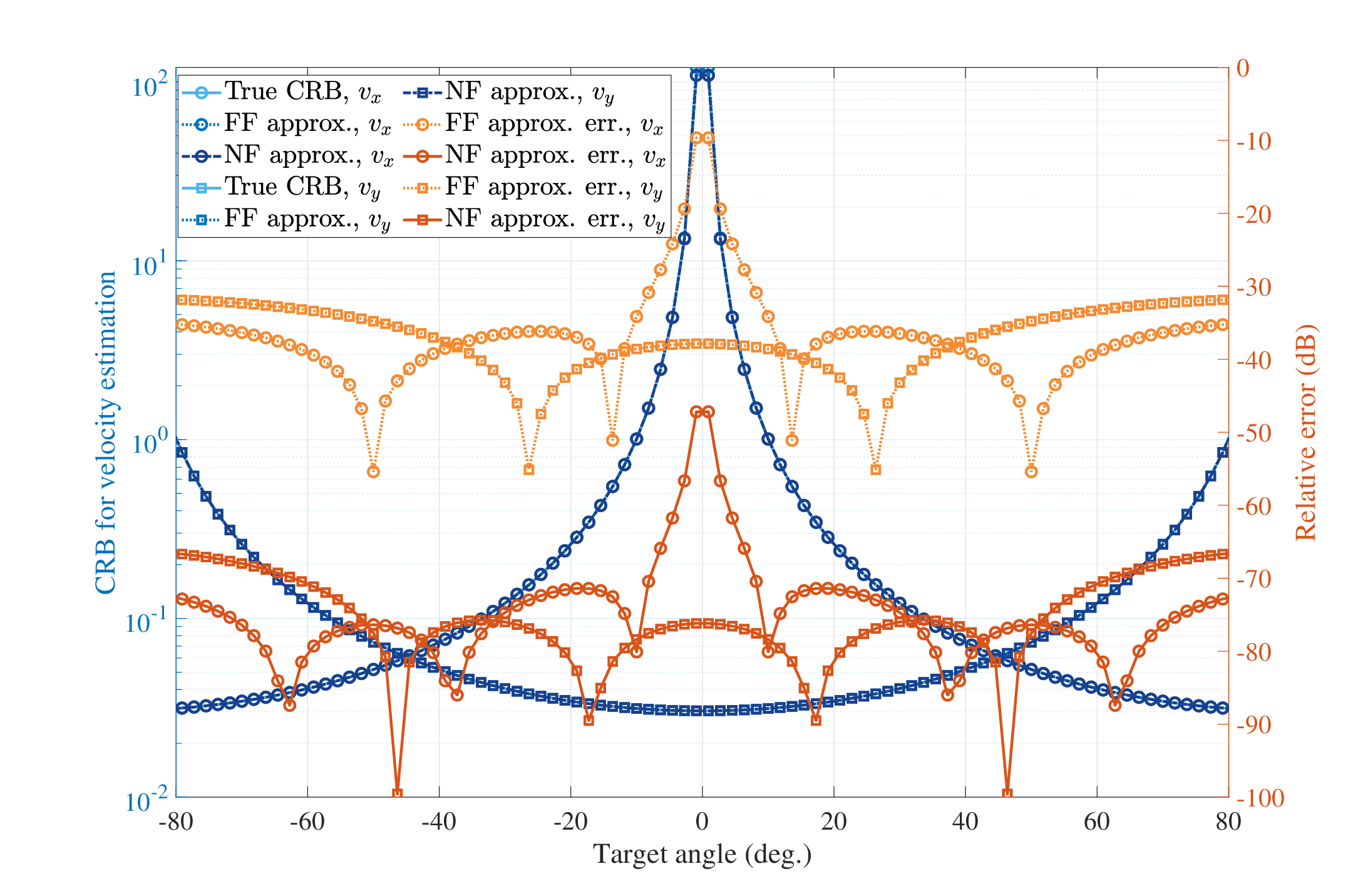}}
\caption{CRB for velocity versus target angle, ($N_t=256$)
\label{vel_angle}}
\end{figure}
Fig.~\ref{vel_antenna} shows the velocity CRBs versus the number of antennas. The true CRBs for both $v_x$ and $v_y$ decrease monotonically as the array size grows, confirming that larger apertures provide stronger signal gain and richer spatial diversity, which translate into higher Fisher information for the slow-time Doppler-induced phase evolution. Moreover, the improvement is more pronounced for the component that is better aligned with the line-of-sight sensitivity under the considered geometry, leading to different decay rates for $v_x$ and $v_y$. The right axis reports the relative errors of the NF and FF approximations. The proposed NF approximation closely tracks the true CRB over the entire antenna-number range, resulting in consistently small errors. In contrast, the FF approximation becomes increasingly inaccurate as $N_t$ increases. Specifically, its relative error grows with array size because near-field curvature and element-dependent range variations become more significant for electrically large apertures, violating the plane-wave assumption. This figure again highlights that far-field-based simplifications may substantially misestimate velocity limits for large arrays, whereas the NF approximation remains reliable in narrow-band near-field Doppler sensing.

\begin{figure}[t]
\centering{\includegraphics[width=1\columnwidth]{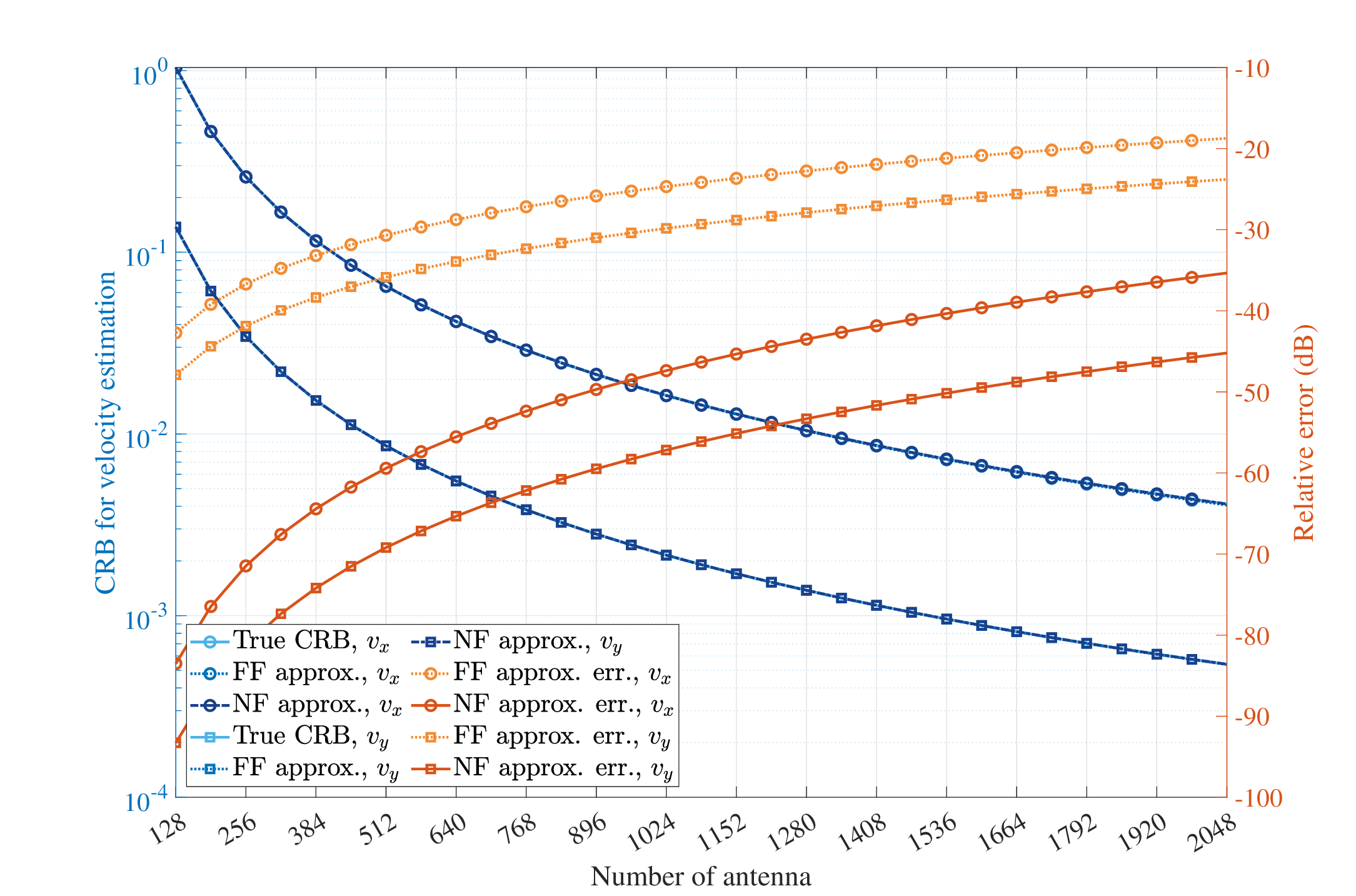}}
\caption{CRB for velocity versus antenna number. 
\label{vel_antenna}}
\end{figure}

\vspace{-0.2cm}
\subsection{CRB for Location with Single Target and Monostatic Setting}
\vspace{-0.1cm}
Then, we report the CRBs for 2D location estimation. We first consider the static target and set ${\bf v}=[0,0]$. 
Fig.~\ref{loc_range} reports the location CRBs versus target range. As the target moves farther away, the true CRBs for both $x$ and $y$ increase monotonically, indicating deterioration in the localization accuracy in larger ranges. This trend is expected since the received signal strength decreases with distance, and the wavefront curvature across the aperture becomes weaker, both of which reduce the sensitivity of the measurements to small perturbations in the target position.
Fig.~\ref{loc_range} further compares the NF and FF approximations with the true CRB, and plots the corresponding relative errors on the right axis. The NF approximation remains tightly matched to the true CRB across the entire range sweep, yielding consistently small errors for both $x$ and $y$. In contrast, the FF approximation exhibits noticeable discrepancies in the near-field region, where the plane-wave assumption neglects the range-dependent phase curvature and element-wise amplitude variations that carry localization information. As the range increases and the target transitions toward the far-field regime, the FF approximation becomes more accurate and its relative error decreases. Overall, Fig.~\ref{loc_range} confirms that accurate near-field modeling is essential for correctly characterizing narrow-band localization limits with large apertures, whereas far-field simplifications are only reliable at sufficiently large ranges.

 \begin{figure}[t]
\centering{\includegraphics[width=1\columnwidth]{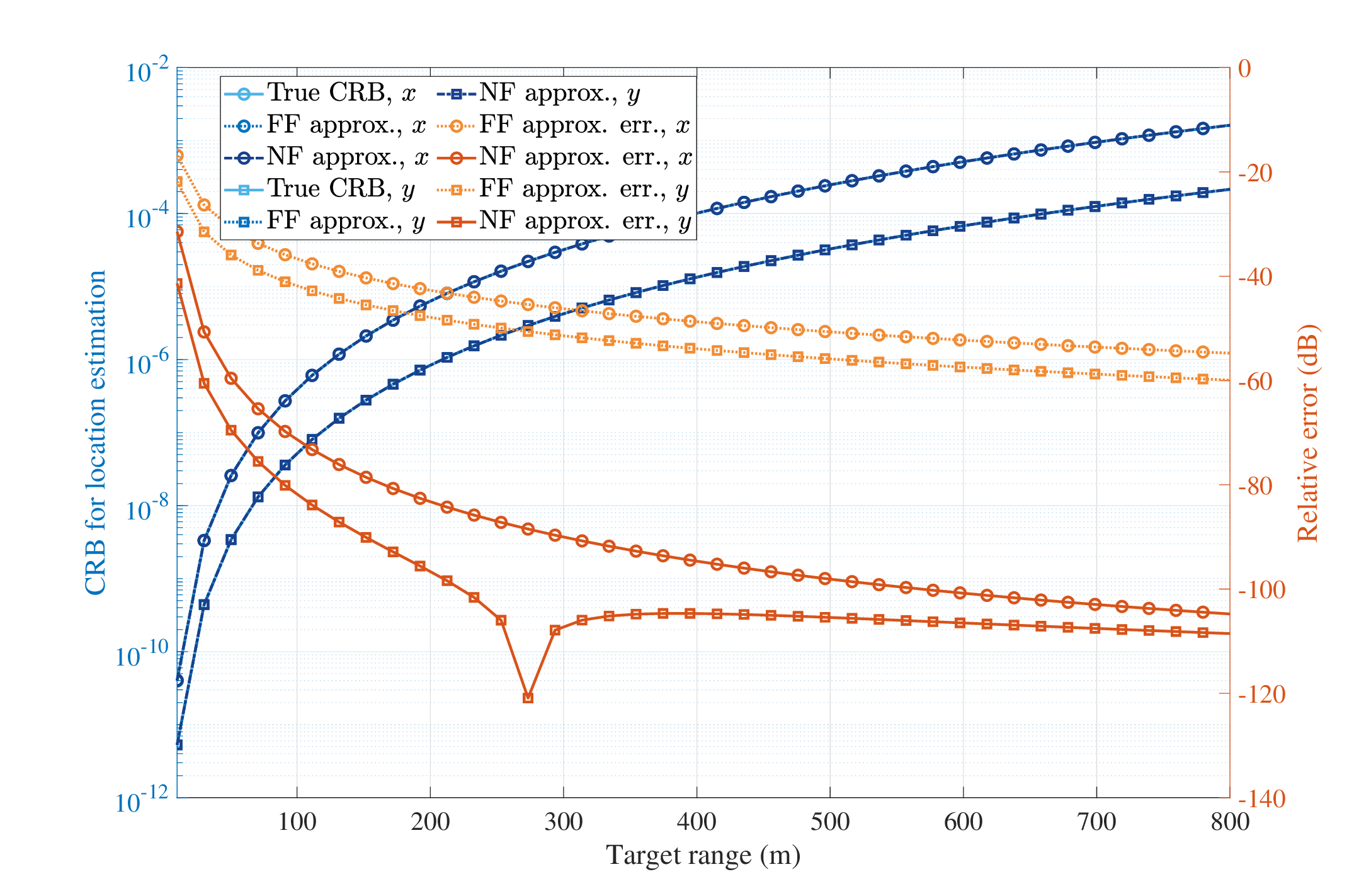}}
\caption{CRB for location versus target range, ($N_t=256$)
\label{loc_range}}
\end{figure}

Fig.~\ref{loc_angle} shows the location CRBs versus the target angle with $N_t=256$. The true CRBs exhibit a strong angular dependence and clear anisotropy between the two Cartesian coordinates. Specifically, the CRB for the cross-range component (associated with the lateral displacement relative to the array broadside) deteriorates sharply around broadside ($\theta\approx 0^\circ$), forming a pronounced peak. This behavior is caused by the intrinsic symmetry of a centered linear aperture, which reduces the sensitivity of the received phase to small lateral perturbations when the target lies close to the array normal. As the target moves away from broadside, the spatial phase gradients become stronger and the localization bounds improve accordingly. In contrast, the other coordinate maintains a comparatively flatter profile, reflecting its different geometric projection onto the element-to-target directions. The relative-error curves further compare the NF and FF approximations. The proposed NF approximation closely tracks the true CRB over the entire angular range, leading to consistently small errors. By contrast, the FF approximation produces an almost angle-invariant bound and exhibits noticeable mismatch, particularly near broadside where near-field curvature and parameter coupling dominate. These results verify that far-field plane-wave modeling can severely mischaracterize the angular dependence of near-field localization limits, while the NF approximation provides an accurate surrogate for narrow-band near-field analysis.

\begin{figure}[t]
\centering{\includegraphics[width=1\columnwidth]{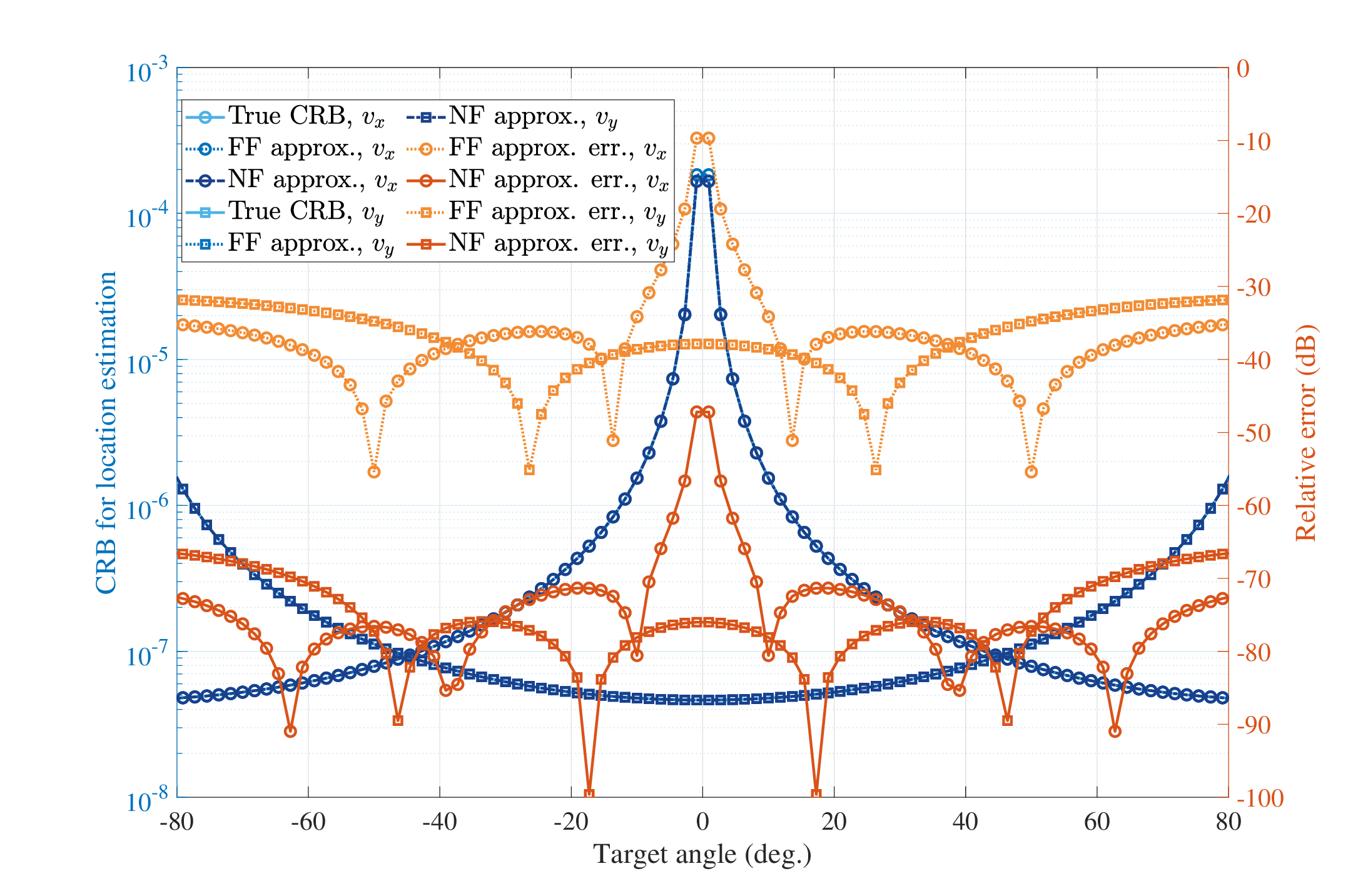}}
\caption{CRB for location versus target angle, ($N_t=256$)
\label{loc_angle}}
\end{figure}

Now, we consider the moving target with ${\bf v}= [1, 4]$ m/s. 
Fig.~\ref{loc_antenna} evaluates the location CRBs versus the number of antennas for a moving target in the monostatic configuration. Compared with the static-target case, motion introduces an additional slow-time phase evolution and, more importantly, non-negligible coupling between the location and velocity parameters in the Fisher information matrix. Nevertheless, the overall trend remains clear: the true CRBs for both $x$ and $y$ decrease monotonically as $N_t$ increases. This confirms that enlarging the aperture improves localization accuracy by enhancing both array gain and spatial phase diversity, even when the target is moving.
The right axis reports the relative errors of the NF and FF approximations. The NF approximation remains closely aligned with the true CRB across the full antenna-number range, indicating that the proposed near-field simplification accurately preserves the motion-induced coupling effects in the narrow-band regime. In contrast, the FF approximation becomes increasingly unreliable as $N_t$ grows. Its relative error increases with array size because the plane-wave assumption neglects near-field curvature and element-dependent range variations, and it also fails to capture the coupling between motion and geometry that becomes more pronounced for electrically large apertures. Overall, Fig.~\ref{loc_antenna} demonstrates that accurate near-field modeling is essential for characterizing localization limits for moving targets with large monostatic apertures.

\begin{figure}[t]
\centering{\includegraphics[width=1\columnwidth]{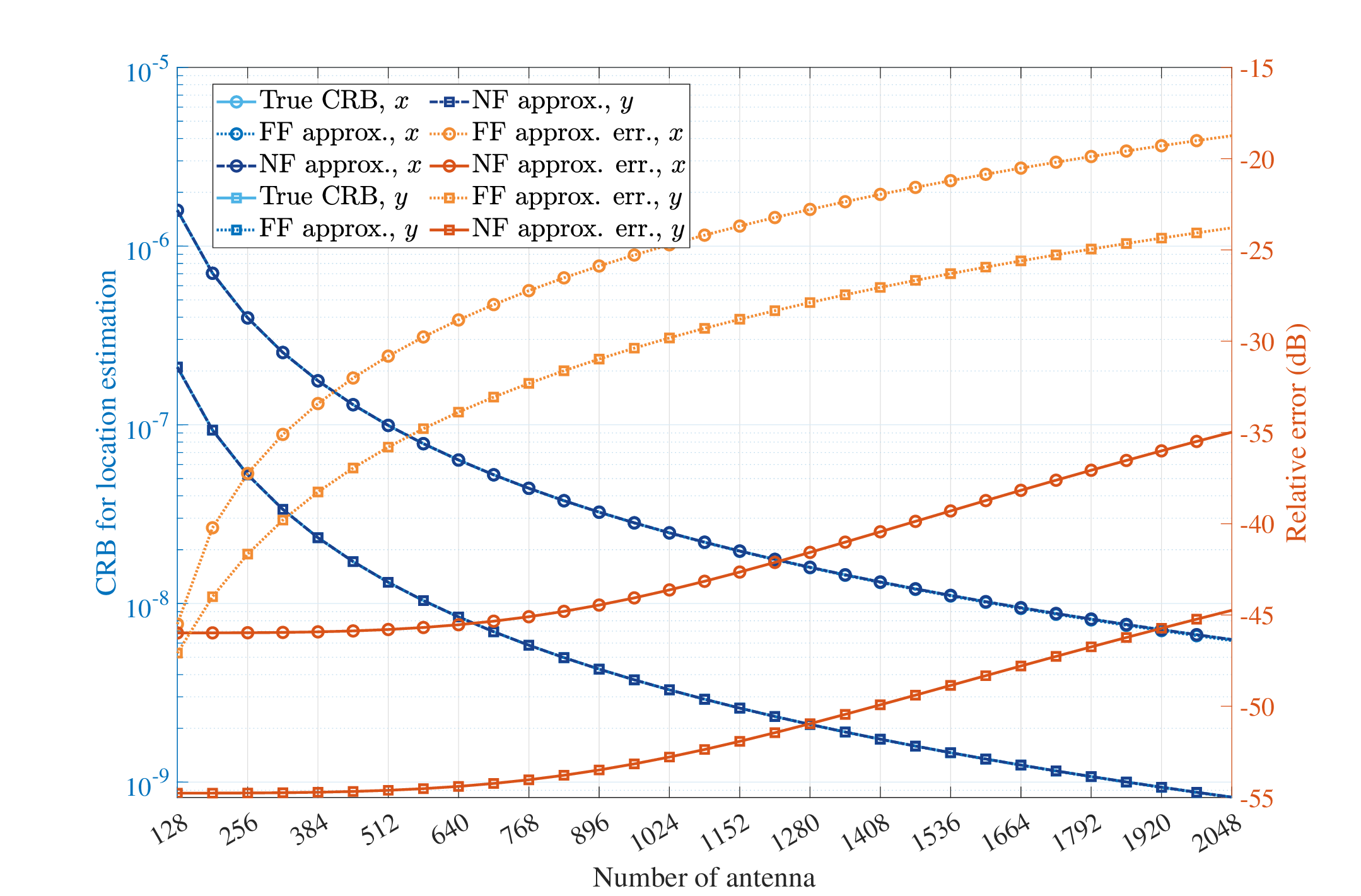}}
\caption{CRB for location versus antenna number 
\label{loc_antenna}}
\end{figure}


\vspace{-0.2cm}
\subsection{CRB for Velocity with Multiple Targets and Bistatic Setting}
\vspace{-0.1cm}

Fig.~\ref{vel_antenna2} considers a bistatic setup with three targets, where the Tx and Rx centroids are separated (Tx at $x_{t_0}=-2$, Rx at $x_{r_0}=2$). The targets are located at $(20^\circ,100~\mathrm{m})$ with ${\bm v}=[1,4]$~m/s, $(-45^\circ,150~\mathrm{m})$ with ${\bm v}=[4,3]$~m/s, and $(-5^\circ,50~\mathrm{m})$ with ${\bm v}=[10,6]$~m/s. As $N_t$ increases, the true CRBs for both $v_x$ and $v_y$ decrease monotonically, showing that larger bistatic apertures improve velocity estimation through higher array gain and spatial diversity, despite additional multi-target coupling.
The NF approximation closely matches the true CRB over the entire antenna range with consistently small relative errors, whereas the FF approximation exhibits larger deviations and becomes less reliable for large arrays, since the plane-wave model cannot capture the near-field curvature and element-dependent bistatic geometry even in multiple and bistatic setup.

\begin{figure}[t]
\centering{\includegraphics[width=1\columnwidth]{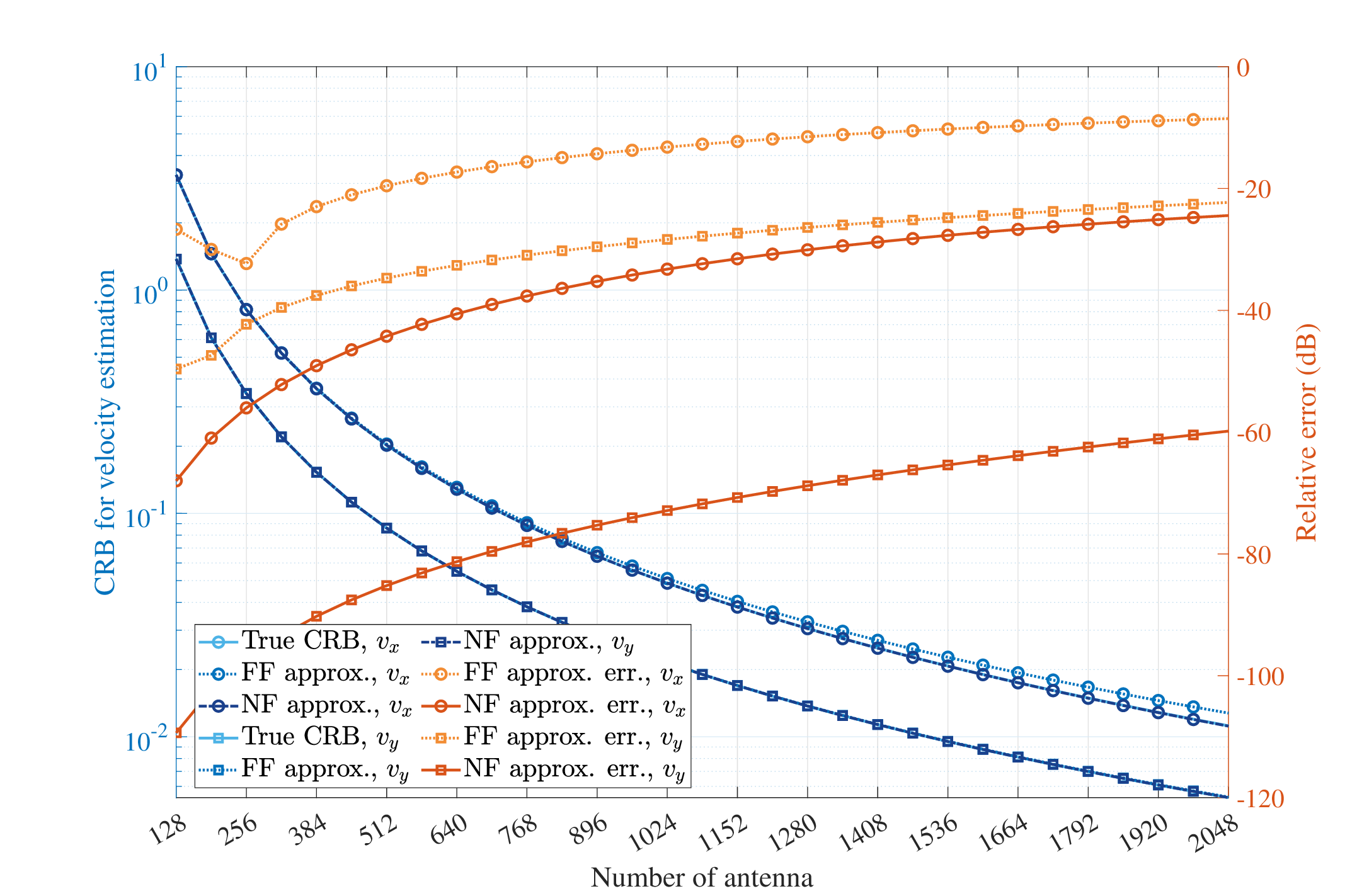}}
\caption{CRB for velocity versus antenna number. 
\label{vel_antenna2}}
\end{figure}

\section{Conclusions}
\label{sec:summ}

This paper (Part~I) investigated fundamental limits for narrow-band near-field sensing with electrically large apertures under a unified monostatic/bistatic geometric model. By explicitly accounting for wavefront curvature and element-wise range dependence, we derived conditional CRBs for the key parameters, including target reflectivity (RCS), velocity, and location, and highlighted the near-field coupling effects that vanish under far-field plane-wave assumptions. To enable efficient evaluation and provide design insight, we further developed a tractable near-field approximation and assessed its accuracy against the true CRB.

Numerical experiments validated the analysis and revealed several consistent trends: (i) all bounds degrade with increasing range and improve with larger aperture size; (ii) the CRBs exhibit strong angular dependence with pronounced broadside behaviors due to array symmetry; and (iii) the proposed near-field approximation remains tightly matched to the true CRB over wide operating regimes, whereas far-field approximations can incur substantial errors in the radiative near-field, especially for electrically large arrays and moving targets. These results provide practical guidelines for narrow-band synthetic-aperture/large-array sensing design and establish a rigorous baseline for Part~II \cite{wei2025PartII}, where we extend the framework to wideband systems and quantify the additional benefits brought by frequency diversity.

\appendices
\section{Proof of Theorem 1}
\label{app:proof_thm1}
First, for the far-field target, the distances between the target and each element are approximately the same.  Then, we have $g_{\mathrm {Tx}}^{\mathrm{FF}}\approx \frac{N_t}{{r_{\mathrm{Tx}}^{\mathrm{FF}}}^2}$ and $g_{\mathrm {Rx}}^{\mathrm{FF}}\approx \frac{N_r}{{r_{\mathrm{Rx}}^{\mathrm{FF}}}^2}$.
 The approximations in \eqref{CRB_RCS.1.1} hold. 

Let the array be centered at the origin so that the $n_t$-th Tx element,
$n_t=1,\ldots,N_t$, is located at
\begin{equation}
  x_{n_t} = \Bigl(n_t - \frac{N_t+1}{2}\Bigr)d_t , \qquad
  y_{n_t}=0 ,
\end{equation}
and therefore
\begin{equation} \label{gem_sum}
  \sum_{n_t=1}^{N_t} x_{n_t} = 0, \qquad
  \sum_{n_t=1}^{N_t} x_{n_t}^2 = \frac{N_t(N_t^{2}-1)d_t^{2}}{12}.
\end{equation}
Consider a single target located at polar coordinates
$(r_{\mathrm{Tx}},\theta_{\mathrm{Tx},q})$ with respect to the Tx array
centroid, i.e.,
\begin{equation}
  x_q = r_{\mathrm{Tx}}\sin\theta_{\mathrm{Tx},q}, \qquad
  y_q = r_{\mathrm{Tx}}\cos\theta_{\mathrm{Tx},q}.
\end{equation}
The distance between the target and the $n_t$-th Tx element is
\begin{align}
  r_{n_t}
  &= \bigl\|\mathbb{I}_q - \mathbb{I}_{n_t}\bigr\|
   = \sqrt{(x_q - x_{n_t})^{2} + y_q^{2}} \nonumber\\
  &= r_{\mathrm{Tx}}
     \sqrt{1
           - \frac{2x_{n_t}\sin\theta_{\mathrm{Tx},q}}{r_{\mathrm{Tx}}}
           + \frac{x_{n_t}^{2}}{r_{\mathrm{Tx}}^{2}} } .
  \label{eq:rnt_exact}
\end{align}
In the Fresnel region, the array aperture is much smaller than the range,
i.e., $D_t/2 \ll r_{\mathrm{Tx}}$, where
$D_t=(N_t-1)d_t$ is the Tx aperture. Hence, using the second-order
Taylor expansion of $\sqrt{1+\varepsilon}$ for
\[
  \varepsilon
  = - \frac{2x_{n_t}\sin\theta_{\mathrm{Tx},q}}{r_{\mathrm{Tx}}}
    + \frac{x_{n_t}^{2}}{r_{\mathrm{Tx}}^{2}},
\]
we obtain
\begin{align}
  r_{n_t}
  &\approx
  r_{\mathrm{Tx}}
   \left(1 + \frac{\varepsilon}{2} - \frac{\varepsilon^{2}}{8}\right) \nonumber\\
  &\approx
  r_{\mathrm{Tx}}
  - x_{n_t}\sin\theta_{\mathrm{Tx},q}
  + \frac{x_{n_t}^{2}\cos^{2}\theta_{\mathrm{Tx},q}}{2r_{\mathrm{Tx}}}.
  \label{eq:rnt_taylor}
\end{align}
To derive the array gain, we need $1/r_{n_t}^{2}$.  Writing
$r_{n_t}=r_{\mathrm{Tx}}(1+\delta_{n_t})$ with
\begin{equation}
  \delta_{n_t}
  = - \frac{x_{n_t}\sin\theta_{\mathrm{Tx},q}}{r_{\mathrm{Tx}}}
    + \frac{x_{n_t}^{2}\cos^{2}\theta_{\mathrm{Tx},q}}{2r_{\mathrm{Tx}}^{2}},
\end{equation}
and using the second-order expansion $(1+\delta)^{-2}\approx
1-2\delta+3\delta^{2}$ for $|\delta|\ll1$, we get
\begin{align}
 & \frac{1}{r_{n_t}^{2}}
  = \frac{1}{r_{\mathrm{Tx}}^{2}}(1+\delta_{n_t})^{-2} \nonumber\\
  &\approx \frac{1}{r_{\mathrm{Tx}}^{2}}
  \Biggl[
     1
     + \frac{2x_{n_t}\sin\theta_{\mathrm{Tx},q}}{r_{\mathrm{Tx}}}
     + \frac{x_{n_t}^{2}}{r_{\mathrm{Tx}}^{2}}
       \bigl(4\sin^{2}\theta_{\mathrm{Tx},q}-1\bigr)
  \Biggr].
  \label{eq:inv_rnt2_expand}
\end{align}
The total Tx gain is then
\begin{align}
  &g_{\mathrm{Tx}}^{\mathrm{NF}}
  = \sum_{n_t=1}^{N_t} \frac{1}{r_{n_t}^{2}} \label{eq:gTx_sum_def} \\
  &\approx \frac{1}{r_{\mathrm{Tx}}^{2}}
    \sum_{n_t=1}^{N_t}
    \Biggl[
       1
       + \frac{2x_{n_t}\sin\theta_{\mathrm{Tx},q}}{r_{\mathrm{Tx}}}
       + \frac{x_{n_t}^{2}}{r_{\mathrm{Tx}}^{2}}
         \bigl(4\sin^{2}\theta_{\mathrm{Tx},q}-1\bigr)
    \Biggr] \nonumber\\
  &= \frac{N_t}{r_{\mathrm{Tx}}^{2}}
     + \frac{2\sin\theta_{\mathrm{Tx},q}}{r_{\mathrm{Tx}}^{3}}
       \sum_{n_t=1}^{N_t} x_{n_t}
     + \frac{4\sin^{2}\theta_{\mathrm{Tx},q}-1}{r_{\mathrm{Tx}}^{4}}
       \sum_{n_t=1}^{N_t} x_{n_t}^{2}. \label{eq:gTx_sum_expand}
\end{align}
Recalling \eqref{gem_sum}, we finally obtain
\begin{equation} \label{gain_NF_Tx}
  g_{\mathrm{Tx}}^{\mathrm{NF}}
  \approx
  \frac{N_t}{r_{\mathrm{Tx}}^{2}}
  + \frac{N_t(N_t^{2}-1)d_t^{2}}{12r_{\mathrm{Tx}}^{4}}
    \bigl(4\sin^{2}\theta_{\mathrm{Tx},q}-1\bigr).
\end{equation}
It is observed that the near-field channel gain of Tx in \eqref{gain_NF_Tx} is not related to the relative position of each antenna element. Hence, for bistatic case, a similar derivation for the Rx array with $N_r$ elements and spacing
$d_r$ yields
\begin{equation} \label{gain_NF_Rx}
  g_{\mathrm{Rx}}^{\mathrm{NF}}
  \approx
  \frac{N_r}{r_{\mathrm{Rx}}^{2}}
  + \frac{N_r(N_r^{2}-1)d_r^{2}}{12r_{\mathrm{Rx}}^{4}}
    \bigl(4\sin^{2}\theta_{\mathrm{Rx},q}-1\bigr),
\end{equation}
where $r_{\mathrm{Rx}}$ and $\theta_{\mathrm{Rx},q}$ denote the target
range and angle with respect to the Rx array centroid.
Substituting \eqref{gain_NF_Tx} and \eqref{gain_NF_Rx} into \eqref{gain_Tx} and \eqref{gain_Rx}, we get near-field channel gain
\begin{subequations} \label{gain_NF_Tx_Rx}
 \begin{flalign} 
   G_{\mathrm{Tx}}^{\mathrm{NF}} & \!\approx\! \frac{\lambda_c^2N_t}{16\pi^2r_{\mathrm{Tx}}^{2}}(1\!+\!\frac{(N_t^{2}-1)d_t^{2}}{12r_{\mathrm{Tx}}^{2}}(4\sin^{2}\theta_{\mathrm{Tx},q}\!-\!1)), \label{gain_NF_Tx_final}\\ 
   G_{\mathrm{Rx}}^{\mathrm{NF}} & \!\approx\! \frac{\lambda_c^2N_r}{16\pi^2r_{\mathrm{Rx}}^{2}}(1\!+\!\frac{(N_r^{2}-1)d_r^{2}}{12r_{\mathrm{Rx}}^{2}}(4\sin^{2}\theta_{\mathrm{Rx},q}\!-\!1)). \label{gain_NF_Rx_final}
 \end{flalign}   
\end{subequations}
Finally, substituting \eqref{gain_NF_Tx} and \eqref{gain_NF_Rx} into 
\eqref{CRB_RCS}, we can directly obtain \eqref{CRB_RCS.1.2}. This completes the proof.

\section{Proof of Theorem 2}
Analogously, we introduce the partial channel gains associated with the velocity components as
\begin{small}
\begin{subequations} \label{gain_Tx_Rx_vx_vy}
 \begin{align} 
\hspace*{-0.5em}   \acute{G}_{\mathrm{Tx}}(m) &\!=\! \|\acute{\bf a}_{\mathrm{T}}(m,q)\|_2^2 = \frac{ m^2 T_{\mathrm{sym}}^2}
{4}\sum_{n_t=1}^{N_t}\frac{(x_q - x_{n_t})^2}{r_{n_t}^4}, \label{gain_Tx_vx}\\ 
\hspace*{-0.5em}   \acute{G}_{\mathrm{Rx}}(m) &\!=\! \|\acute{\bf a}_{\mathrm{R}}(m,q)\|_2^2 = \frac{ m^2 T_{\mathrm{sym}}^2}
{4}\sum_{n_r=1}^{N_r}\frac{(x_q - x_{n_r})^2}{r_{n_r}^4}, \label{gain_Rx_vx}\\
\hspace*{-0.5em}   \grave{G}_{\mathrm{Tx}}(m) &\!=\! \|\grave{\bf a}_{\mathrm{T}}(m,q)\|_2^2 = \frac{ m^2 T_{\mathrm{sym}}^2}
{4}\sum_{n_t=1}^{N_t}\frac{(y_q - y_{n_t})^2}{r_{n_t}^4}, \label{gain_Tx_vy}\\ 
\hspace*{-0.5em}   \grave{G}_{\mathrm{Rx}}(m) &\!=\! \|\grave{\bf a}_{\mathrm{R}}(m,q)\|_2^2 = \frac{ m^2 T_{\mathrm{sym}}^2}
{4}\sum_{n_r=1}^{N_r}\frac{(y_q - y_{n_r})^2}{r_{n_r}^4}. \label{gain_Rx_vy}
 \end{align}   
\end{subequations}
\end{small}
\noindent\hspace{-1em} Again, for the far-field target, we have $\|\mathbb I_q - \mathbb I_{n_t}\|\approx r_{\mathrm{Tx}}^{\mathrm{FF}}$, $x_q \approx r_{\mathrm{Tx}}^{\mathrm{FF}}\sin\theta_{\mathrm{Tx},q}$. Then, we can far-field partial gain approximation as
\begin{small}
\begin{subequations} \label{FF_gain_Tx_Rx_vx_vy}
 \begin{align} 
   \acute{G}_{\mathrm{Tx}}^{\mathrm{FF}}(m) & \approx \frac{N_t m^2 T_{\mathrm{sym}}^2}{4 {r_{\mathrm{Tx}}^{\mathrm{FF}}}^2}
\left(\sin^2\theta_{\mathrm{Tx},q} \!+\! \frac{(N_t^2-1)d_t^2}{12{r_{\mathrm{Tx}}^{\mathrm{FF}}}^2}
\right), \label{FF_approx_gain_Tx_vx}\\ 
   \acute{G}_{\mathrm{Rx}}^{\mathrm{FF}}(m) & \approx \frac{N_r m^2 T_{\mathrm{sym}}^2}{4 {r_{\mathrm{Rx}}^{\mathrm{FF}}}^2}
\left(\sin^2\theta_{\mathrm{Rx},q} \!+\! \frac{(N_r^2-1)d_r^2}{12{r_{\mathrm{Rx}}^{\mathrm{FF}}}^2}
\right), \label{FF_approx_gain_Rx_vx}\\
   \grave{G}_{\mathrm{Tx}}^{\mathrm{FF}}(m) & \approx \frac{N_t m^2 T_{\mathrm{sym}}^2}{4{r_{\mathrm{Tx}}^{\mathrm{FF}}}^2}
\cos^2\theta_{\mathrm{Tx},q}, \label{FF_approx_gain_Tx_vy}\\ 
   \grave{G}_{\mathrm{Rx}}^{\mathrm{FF}}(m) & \approx \frac{N_r m^2 T_{\mathrm{sym}}^2}{4{r_{\mathrm{Rx}}^{\mathrm{FF}}}^2}
\cos^2\theta_{\mathrm{Rx},q}. \label{FF_approx_gain_Rx_vy}
 \end{align}   
\end{subequations}
\end{small}
The cross term $\Re\!\left\{ 
\acute{\mathbf{a}}_{\mathrm{R}}^{H}(m,q)\mathbf{a}_{\mathrm{R}}(m,q)
\mathbf{a}_{\mathrm{T}}^{H}(m,q)\acute{\mathbf{a}}_{\mathrm{T}}(m,q)
\right\}$ and $\Re\!\left\{ 
\grave{\mathbf{a}}_{\mathrm{R}}^{H}(m,q)\mathbf{a}_{\mathrm{R}}(m,q)
\mathbf{a}_{\mathrm{T}}^{H}(m,q)\grave{\mathbf{a}}_{\mathrm{T}}(m,q)
\right\}$ can be calculated by
\begin{small}
\begin{subequations}
\begin{align}
&\hspace*{-0.5em}\acute{G}_{\mathrm{cr.}}(m)  = 
\frac{\lambda_k^2 m^2 T_{\mathrm{sym}}^2}{64\pi^2}
\sum_{n_r=1}^{N_r} \sum_{n_t=1}^{N_t}
\frac{(x_q - x_{n_r})(x_q - x_{n_t})}{r_{n_r}^3 r_{n_t}^3},\\
& \hspace*{-0.5em}\grave{G}_{\mathrm{cr.}}(m) = 
\frac{\lambda_k^2 m^2 T_{\mathrm{sym}}^2}{64\pi^2}
\sum_{n_r=1}^{N_r} \sum_{n_t=1}^{N_t}
\frac{y_q^2}{r_{n_r}^3 r_{n_t}^3}.
\end{align}
\end{subequations}
\end{small}
Its far-field approximation can be directly given by 
\begin{subequations} \label{cross_approx_FF}
\begin{align} 
&\acute{G}_{\mathrm{cr.}}^{\mathrm{FF}}(m) \approx \frac{N_t N_r \lambda_k^{2} m^{2} T_{\mathrm{sym}}^{2}}
{64 \pi^{2} r_{\mathrm{Tx}}^{2} r_{\mathrm{Rx}}^{2}}
\sin\theta_{\mathrm{Tx},q}\sin\theta_{\mathrm{Rx},q},\\
& \grave{G}_{\mathrm{cr.}}^{\mathrm{FF}}(m) \approx \frac{N_t N_r \lambda_k^{2} m^{2} T_{\mathrm{sym}}^{2}}
{64 \pi^{2} r_{\mathrm{Tx}}^{2} r_{\mathrm{Rx}}^{2}}
\cos\theta_{\mathrm{Tx},q}\cos\theta_{\mathrm{Rx},q}.
\end{align}
\end{subequations}
Substituting \eqref{FF_gain_Tx_Rx_vx_vy} and \eqref{cross_approx_FF} into \eqref{CRB_vel.q1_single} and \eqref{CRB_vel.q2_single}, we can directly get 
\eqref{CRB_vel_FF_final}.

For the near-field target, we have 
\begin{small}
\begin{subequations} \label{NF_gain_Tx_Rx_vx_vy}
 \begin{align} 
   \acute{G}_{\mathrm{Tx}}^{\mathrm{NF}}(m) & \approx 
\frac{N_t m^{2} T_{\mathrm{sym}}^{2}}{4{r_{\mathrm{Tx}}^{\mathrm{NF}}}^{2}}
B_{\mathrm{Tx}}^{(x)}, \label{NF_approx_gain_Tx_vx}\\ 
   \acute{G}_{\mathrm{Rx}}^{\mathrm{NF}}(m) & \approx \frac{N_r m^{2} T_{\mathrm{sym}}^{2}}{4{r_{\mathrm{Rx}}^{\mathrm{NF}}}^{2}}
B_{\mathrm{Rx}}^{(x)}, \label{NF_approx_gain_Rx_vx}\\
   \grave{G}_{\mathrm{Tx}}^{\mathrm{NF}}(m) & \approx \frac{N_t m^{2} T_{\mathrm{sym}}^{2}}{4{r_{\mathrm{Tx}}^{\mathrm{NF}}}^{2}}
B_{\mathrm{Tx}}^{(Y)}, \label{NF_approx_gain_Tx_vy}\\ 
   \grave{G}_{\mathrm{Rx}}^{\mathrm{NF}}(m) & \approx \frac{N_r m^{2} T_{\mathrm{sym}}^{2}}{4{r_{\mathrm{Rx}}^{\mathrm{NF}}}^{2}}
B_{\mathrm{Rx}}^{(y)}. \label{NF_approx_gain_Rx_vy}
 \end{align}   
\end{subequations}
\end{small}
\noindent\hspace{-0.5em}Similarly, the cross term approximation of near-field target can be given by 
\begin{small}
\begin{subequations} \label{cross_approx_NF}
\begin{align} 
&\acute{G}_{\mathrm{cr.}}^{\mathrm{NF}}(m) \!\approx\! \frac{N_t N_r \lambda_k^{2} m^{2} T_{\mathrm{sym}}^{2}}
{64 \pi^{2} r_{\mathrm{Tx}}^{2} r_{\mathrm{Rx}}^{2}}
\sin\theta_{\mathrm{Tx},q}\sin\theta_{\mathrm{Rx},q}\left(1\!+\!\Delta_{x,\mathrm{Tx}}^{\mathrm{NF}}\!+\!\Delta_{x,\mathrm{Rx}}^{\mathrm{NF}}
\right)
,\\
& \grave{G}_{\mathrm{cr.}}^{\mathrm{NF}}(m) \!\approx\! \frac{N_t N_r \lambda_k^{2} m^{2} T_{\mathrm{sym}}^{2}}
{64 \pi^{2} r_{\mathrm{Tx}}^{2} r_{\mathrm{Rx}}^{2}}
\cos\theta_{\mathrm{Tx},q}\cos\theta_{\mathrm{Rx},q}\left(1\!+\!\Delta_{y,\mathrm{Tx}}^{\mathrm{NF}}\!+\!\Delta_{y,\mathrm{Rx}}^{\mathrm{NF}}\right).
\end{align}
\end{subequations}
\end{small}
\noindent\hspace{-0.5em}Substituting \eqref{gain_NF_Tx_Rx}, \eqref{NF_gain_Tx_Rx_vx_vy} and \eqref{cross_approx_NF} into \eqref{CRB_loc.q1_single} and \eqref{CRB_loc.q2_single}, we can directly obtain \eqref{CRB_vel_NF_final}. This completes the proof.


\ifCLASSOPTIONcaptionsoff
  \newpage
\fi

\bibliographystyle{IEEEtran}
\bibliography{ref}

\ifCLASSOPTIONcaptionsoff
  \newpage
\fi

\end{document}